\newtheorem{mechanism}{Mechanism}
\begin{document}

\title{ Nearly Complete Characterization of 2-Agent Deterministic Strategyproof Mechanisms 
for Single Facility Location in $L_p$ Space
\thanks{Thanks for my advisors Pinyan Lu and Hu Fu for giving me advise on this problem.}
}
%
%
\author{Jianan Lin
}
\authorrunning{Jianan Lin}
%
\institute{School of Computer Science, Fudan University,\\ Shanghai 201203, China\\
\email{jnlin16@fudan.edu.cn}
}
\maketitle              
\begin{abstract}
We consider the problem of locating a single facility for 2 agents in $L_p$ space ($1<p<\infty$) and give a nearly complete characterization of such deterministic strategyproof mechanisms. We use the distance between an agent and the facility in $L_p$ space to denote the cost of the agent. A mechanism is strategyproof iff no agent can reduce her cost from misreporting her private location.

We show that in $L_p$ space ($1<p<\infty$) with 2 agents, any location output of a deterministic, unanimous, translation-invariant strategyproof mechanism must satisfy a set of equations and mechanisms are continuous, scalable. In one-dimensional space, the output must be one agent's location, which is easy to prove in any $n$ agents. 

However, in $m$-dimensional space ($m\ge 2$), the situation will be much more complex, with only 2-agent case finished. We show that the output of such a mechanism must satisfy a set of equations, and when $p=2$ the output must locate at a sphere with the segment between the two agents as the diameter. Further more, for $n$-agent situations, we find that the simple extension of this the 2-agent situation cannot hold when dimension $m>2$ and prove that the well-known general median mechanism will give an counter-example.

Particularly, in $L_2$ (i.e., Euclidean) space with 2 agents, such a mechanism is rotation-invariant iff it is dictatorial; and such a mechanism is anonymous iff it is one of the three mechanisms in Section 4. And our tool implies that any such a mechanism has a tight lower bound of 2-approximation for maximum cost in any multi-dimensional space. 

\keywords{Facility Location  \and Mechanism Design \and $L_p$ Space.}
\end{abstract}
%
%
%
\section{Introduction}


We consider the problem of locating a single facility for $n$ (mainly in $n=2$) agents in $L_p$ space ($1<p<\infty$). This facility serves these agents and every agent has a cost which is equal to the distance to access the facility. An agent's location is private information, i.e., only she herself knows it. A strategyproof mechanism means that no agent can gain (i.e., reduce her cost) from misreporting her location. A mechanism is deterministic if the output is a specific location. Compared to randomized mechanisms, deterministic mechanisms often receive more attention because of their simplicity and ease of use.

A basic area of facility location study is the characterization of truthful mechanisms. In many situations and settings, the goal is to design a strategyproof mechanism which can minimize the objective cost function (e.g., social cost or maximum cost) as far as possible. Therefore, giving the characterization of such mechanisms will be helpful to further study. In this area, an important work is made by Moulin \cite{Moulin1980} that in any one-dimension space (they call it single-peaked preferences), every strategyproof, efficient (the selected alternative is Pareto optimal, which is different from our setting) and anonymous voting scheme (mechanism) must be a median voter scheme (to select the median agent). After that, Border and Jordan \cite{Border1983} extend his result to Euclidean space and show that it induces to median voter schemes in each dimension separately. Other works include Barber{\`a} et al. \cite{Barber1993} that the result also fits in any $L_1$ norm, and \cite{Barbera1998} that in a compact set of the
Euclidean space, which is a more restricted domain, all those mechanisms behave like generalized
median voter schemes. Nearly all the relevant works focus on deterministic mechanisms and leaves the randomized ones an open question.

As for other settings, Tang et al. \cite{Tang2018} firstly discuss the characterization of group-strategyproof (No group of agents can reduce their cost together by misreporting their location) both in deterministic and randomized mechanisms (The former characterization is complete and the latter is nearly complete). And Feigenbaum et al. \cite{LpNorm} discuss the characterization of 2-agent randomized strategyproof mechanism in one-dimensional space. However, before our work, there has not been any discussion of the characterization of deterministic strategyproof mechanisms in any metric space.

One measurement of the facility location mechanisms is the cost they achieve. There are two common view: maximum cost (i.e., the maximum cost between the facility and some agent) and social cost (i.e., the sum of the cost between the facility and all the agents). The ratio between the cost one mechanism achieve and the minimum cost is widely used in this study.  Procaccia and Tennenholtz \cite{Procaccia2009} study  approximately optimal strategyproof mechanisms for facility games both in maximum cost and social cost view, focusing on one-dimensional space and randomized mechanisms. They propose an interesting randomized mechanism for the maximum cost view, which achieves a ratio of 3/2 and they proves it to be the best. Subsequently, Alon et al. \cite{Alon2010} study the characterization of deterministic and randomized mechanisms in more general metric space (such as network and rings).

Other related works about facility location are $k$-facility location problems. Compared to single facility location problems, they are more complex. Agents can have preference on different facilities and their location can be public information this time. One important and classical work is made by Fotakis and Tzamos \cite{Fotakis2013}. They study mechanisms that are winner-imposing, in the sense that the mechanisms allocate facilities to agents and require that each agent allocated a facility should connect to it. Also they prove an upper bound of $4k$ in the social cost view. And there are many other follow-up works (see e.g., \cite{fong2018facility,serafino2014heterogeneous,filos2017facility,yuan2016facility,li2019strategyproof} ).

Our work is motivated by \cite{LpNorm}. Their result about the characterization of 2-agent randomized strategyproof mechanisms in a line (i.e., one-dimension) leads us studying the deterministic ones. And some of our technique is motivated by \cite{Tang2018} (e.g., the proof of continuity). We show that in any one-dimensional $L_p$ space with $n$ agents, the output of a deterministic unanimous translation-invariant strategyproof mechanism should locate at one agent's location and in multi-dimensional $L_p$ space ($1<p<\infty$) with 2 agents, the output of such a mechanism should satisfy a set of equations. Particularly, in $L_2$ space (i.e., the Euclidean space), let the two agents be $A, B$ and the output be $W$, then they should satisfy $\overrightarrow{WA}\cdot \overrightarrow{WB}=0$. These characterizations are nearly complete and next we give complete characterization of two more specific situations (also restricted in 2 agents). The first one is that such a mechanism is rotation-invariant if and only if it is dictatorial (i.e., the output location is always the same agent). The second one is that such a mechanism is anonymous (i.e., all the permutations of agents does not affect the output) if and only if it is one of the three mechanisms we give in Section 4. In the end, we show that the general median mechanism is an counter-example of the simple extension from 2-agent situation to $n$-agent situation in $m$-dimensional space ($m>2$), which means that the characterization of $n$-agent situation may be very complex, unfortunately. Also, using our tool, we ensure a tight lower bound of 2-approximation for maximum cost in any multi-dimensional space.

\section{Preliminaries}

We consider the single facility location game with $n$ ($n\ge 2$) agents $N = \{ 1, 2, ..., n \}$.
All the agents are located in a $m$-dimensional $L_p$ space $\mathbb{R}_m$.
Obviously for $\forall x, y \in \mathbb{R}_m$, there is $\|x\| + \|y\| \ge \|x+y\|$ and the equality holds
if and only if $x$ and $y$ have the same directions. We use $A_i \in \mathbb{R}_m$ to denote agent $i$'s location in the space. Therefore a location profile is a vector consisting of all the agents' locations $\bm{A} = (A_1,A_2,...,A_n)$.

A deterministic mechanism is a map $f: \mathbb{R}_m^n \to \mathbb{R}_m$ from a location profile to the location of the facility. We use $W = f(\bm{A})$ to denote the output location and the cost of agent $i$ is the distance between her and the facility, i.e., $d(A_i, W) = \| W - A_i \|$. We will mix the two representations in this paper.

Next we formally define some properties of a deterministic mechanism.

\begin{definition}
{\rm (Strategyproofness).} A mechanism $f$ is {\rm strategyproof} if and only if no agent can reduce her
distance to the output by misreporting her location. It means that, for $\forall \bm{A}\in \mathbb{R}_m^n, \forall i \in N, \forall A_{i}' \in \mathbb{R}_m$, there is 
$$ d(f(\bm{A}), A_i) \le d(f(A_{i}', \bm{A}_{-i}), A_i). $$
Here $\bm{A}_{-i} = (A_1,...,A_{i-1},A_{i+1},...,A_n )$, i.e., the profile without $A_i$.
\end{definition}

\begin{definition}
{\rm (Unanimity).} A mechanism $f$ is {\rm unanimous} if and only if when $\forall A_i = C$, we have
$$ f(\bm{A}) = C, $$
which means that if all agents report the same location, then the mechanism must output this location.
\end{definition}

\begin{definition}
{\rm (Dictatorship).} A mechanism $f$ is {\rm dictatorial} if and only if $\exists i\in N, \forall A\in \mathbb{R}_{m}^{n}$, there is 
$$f(\bm{A}) = A_i.$$ 
At this time we call $i$ is the dictator.
\end{definition}

\begin{definition}
{\rm (Anonymity).} A mechanism $f$ is {\rm anonymous} if and only if when any group of the agents exchange their location reports, the output is still the same, which means that any permutation of the agents' locations does not affect the output.
\end{definition}

\begin{definition}
{\rm (Translational Invariance).} A mechanism $f$ is {\rm translation-invariant} if and only if 
$$ \forall \bm{A}\in \mathbb{R}_{m}^{n}, \forall t\in \mathbb{R}_m, f(\bm{A} + t) = f(\bm{A}) + t . $$
Here, $f(\bm{A} + t) = f(A_1 + t,...,A_n + t)$. This means that if we move all the agents the same distance in one direction, then the output location will also move the same distance in this direction.
\end{definition}

\begin{definition}
{\rm (Scalability).} A mechanism $f$ is {\rm scalable} if and only if 
    $$ \forall \bm{A}\in \mathbb{R}_{m}^{n}, \forall k\in \mathbb{R}, k > 0, f(k\cdot \bm{A}) = k\cdot f( \bm{A}). $$
Here, $f(k\cdot \bm{A}) = f(k\cdot A_1,...,k\cdot A_n)$.
\end{definition}

Notice that if a mechanism $f$ satisfies translational invariance and scalability, then we will have
$$ \forall \bm{A}\in \mathbb{R}_{m}^{n}, \forall k\in \mathbb{R}, k > 0, \forall t\in \mathbb{R}_m,
f(k\cdot \bm{A} + t) = k\cdot f(\bm{A}) + t.$$

\begin{definition}
{\rm (Rotational Invariance).} A mechanism $f$ is {\rm rotation-invariant} if and only if when all the agents are rotated at the same angle around a point (not necessary to be an agent) in the same direction in some dimensions, then the output will also be rotated at this angle around the point in such direction in these dimensions.
\end{definition}

For convenience, the properties of rotational invariance and anonymity are only described with natural languages. Notice that the description of rotational invariance includes situations that points are rotated on an axis and so on, because of “in some dimensions".

\section{Nearly Complete Characterization of Deterministic Mechanisms}

We will start with the situations in one-dimensional space as a warm-up and prove that this characterization is suitable for any $n$ agents. Then we will discuss the multi-dimensional situations in $L_p$ space with 2 agents for
$1<p<\infty$. The reason why we abandon $L_1$ and $L_\infty$ is that in these two spaces, there exists two vectors $x, y$ with different directions that $\|x\|+\|y\|=\|x+y\|$, which is not a friendly property. We use $m$ to denote number of dimensions.

\subsection{One-Dimensional Situation}

The one-dimensional situation is simple. In any $L_p$ space, $\forall a,b,c\in \mathbb{R}$, if $a\le b\le c$, then we have $\|a-b\| + \|b-c\| = \|a-c\|$. For convenience, we call the negative direction in the coordinate axis “left" and call the positive direction “right".

\begin{lemma}
{\rm (Continuity)} If mechanism $f$ is strategyproof, then for $\forall i\in N$ 
with any fixed $\bm{A}_{-i}\in \mathbb{R}_m^{n-1}$, we have
$$ \| u(A_i) - u(A_i') \| \le \| A_i - A_i' \|, $$
where $u(A_i) = \|f(A_i, \bm{A}_{-i}) - A_i\|$. This implies that $u(A_i)$ is a continuous function.
\end{lemma}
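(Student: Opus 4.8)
The plan is to prove the stronger $1$-Lipschitz bound directly, after which ordinary continuity follows for free. Throughout I fix the index $i$ and the other agents' reports $\bm{A}_{-i}$, and abbreviate the two facility outputs as $W = f(A_i, \bm{A}_{-i})$ and $W' = f(A_i', \bm{A}_{-i})$. With this notation, $u(A_i) = \|W - A_i\|$ and $u(A_i') = \|W' - A_i'\|$ are exactly the costs incurred by agent $i$ when she reports truthfully from $A_i$ and from $A_i'$ respectively, so the claim is that agent $i$'s honest cost is a $1$-Lipschitz function of her own location.

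The key idea is to invoke strategyproofness in \emph{both} directions and pair each use with one triangle inequality. First, the agent truly located at $A_i$ cannot benefit from reporting $A_i'$, so by Definition~1 we get $\|W - A_i\| \le \|W' - A_i\|$. Combining this with the triangle inequality $\|W' - A_i\| \le \|W' - A_i'\| + \|A_i' - A_i\|$ yields $u(A_i) \le u(A_i') + \|A_i - A_i'\|$. By the symmetric argument, the agent truly located at $A_i'$ cannot benefit from reporting $A_i$, so $\|W' - A_i'\| \le \|W - A_i'\| \le \|W - A_i\| + \|A_i - A_i'\|$, giving $u(A_i') \le u(A_i) + \|A_i - A_i'\|$. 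Putting the two symmetric bounds together gives $\|u(A_i) - u(A_i')\| \le \|A_i - A_i'\|$, which is the stated inequality. Lipschitz continuity with constant $1$ then immediately implies that $u$ is continuous in $A_i$, since reports within distance $\varepsilon$ keep the cost within $\varepsilon$.

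I do not expect a substantial obstacle here; the only point requiring care is the chaining. One must apply strategyproofness once in each direction and combine each application with the triangle inequality \emph{before} adding the two estimates, rather than trying to bound the output displacement $\|W - W'\|$ directly, which strategyproofness does not control. It is worth noting that the argument uses only the metric axioms, in particular the triangle inequality, so although the lemma is stated in the one-dimensional warm-up it in fact holds verbatim in every $L_p$ space and every dimension $m$, which is why it can be reused freely in the multi-dimensional sections that follow.
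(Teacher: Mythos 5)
Your proof is correct and is essentially the paper's own argument: the paper assumes $u(A_i) - u(A_i') > \|A_i - A_i'\|$ (WLOG) and derives $\|f(A_i',\bm{A}_{-i}) - A_i\| \le u(A_i') + \|A_i - A_i'\| < u(A_i)$, contradicting strategyproofness, which is exactly the contrapositive of your direct chain combining one application of strategyproofness with one triangle inequality in each direction. Your observation that the lemma holds in any dimension and any $L_p$ space matches the paper's remark immediately after its proof.
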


\begin{proof}
We assume that $\exists A_i, A_i'$ such that $\| u(A_i) - u(A_i') \| > \| A_i - A_i' \|$. Also without loss of generality, we assume that $u(A_i) > u(A_i')$ which means that $u(A_i) - u(A_i') > \| A_i - A_i' \|$, then we have
    $$\begin{aligned}
    \| f(A_i', \bm{A}_{-i}) - A_i \| &\le \| f(A_i', \bm{A}_{-i}) - A_i' \| + \| A_i - A_i' \| \\
    &= u(A_i') + \| A_i - A_i' \| \\
    &< u(A_i)
    = \|f(A_i, \bm{A}_{-i}) - A_i\|.
    \end{aligned}$$
Notice this is contradict with the strategyproofness, because $A_i$ can misreport her location as $A_i'$ to reduce her cost. Therefore the previous inequality in lemma must hold. When $A_i' \to A_i$, we see $u(A_i)$ is a continuous function.
\qed
\end{proof}

This lemma is very useful and fits any $m$ (dimensions) and $n$ (agents).

\setlength{\abovecaptionskip}{0.cm}
\begin{figure}
\centering
\vspace{-0.6cm}
\includegraphics[scale=0.4]{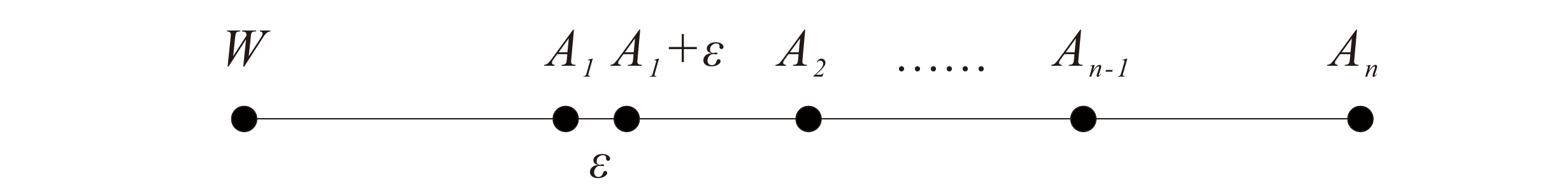}
\caption{Case 1 in the proof of Theorem 1} \label{fig1}
\vspace{-0.8cm}
\end{figure}

\begin{theorem}
When $m=1$, the output of a deterministic unanimous translation-invariant strategyproof mechanism must be one agent's location.
\end{theorem}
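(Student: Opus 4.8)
The plan is to reduce everything to understanding how $f$ moves when a single agent changes her report, and then to glue these one-agent pictures together using unanimity and translation invariance. First I would fix an agent $i$ and all other reports $\bm A_{-i}$, and study the single-variable map $W_i(t) = f(t, \bm A_{-i})$. By Lemma 1 this map is continuous, so its range $R_i$ is an interval $[m_-, m_+]$ (with $m_-$ or $m_+$ possibly infinite). Strategyproofness for agent $i$ says precisely that $|W_i(t) - t| \le |W_i(s) - t|$ for every alternative report $s$; since $\{W_i(s)\}$ sweeps out exactly $R_i$, this forces $|W_i(t) - t| = \mathrm{dist}(t, R_i)$, i.e. $W_i(t)$ is the point of the interval $R_i$ nearest to $t$. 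Because the nearest point in an interval is unique, I get the clamp formula $W_i(t) = \max(m_-, \min(t, m_+))$. The crucial consequence is that $W_i(t) = t$ for every $t \in [m_-, m_+]$ and $W_i$ is constant outside: an agent can only drag the facility to the boundary of an interval she controls, and inside that interval she simply gets her own location.

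Specializing to the main case $n = 2$, write $f(x,y)$ and apply the clamp picture to agent $1$ with $y$ fixed, giving $f(x,y) = \max(a(y), \min(x, b(y)))$. Unanimity, $f(y,y) = y$, forces $a(y) \le y \le b(y)$, and translation invariance forces the controllable bounds to move rigidly with the profile, so $a(y) = y + \alpha$ and $b(y) = y + \beta$ for constants $\alpha \le 0 \le \beta$ (each allowed to be $\mp\infty$). Thus $f(x,y) = \max(y+\alpha, \min(x, y+\beta))$. Now I would impose the same analysis with the roles swapped: for fixed $x$, the map $y \mapsto f(x,y)$ must also be a clamp whose diagonal middle piece is exactly the identity $y$ (agent $2$ must receive her own peak on the range she controls). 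Computing $y \mapsto \max(y+\alpha,\min(x,y+\beta))$ shows its slope-one pieces carry the offsets $\alpha$ and $\beta$ rather than being the identity, so the only way to be a legal clamp in $y$ is to kill those offsets: each of $\alpha, \beta$ must be $0$ or push off to $\pm\infty$. This leaves exactly four mechanisms — the two dictatorships, $f = \min(x,y)$ (leftmost), and $f = \max(x,y)$ (rightmost) — and every one of them outputs either $x$ or $y$, proving the claim for two agents.

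For general $n$ the same engine applies: each agent's controllable interval now has endpoints depending on $\bm A_{-i}$, translation invariance makes those endpoints translate with the reports, and the mutual consistency of all $n$ clamp pictures forces a generalized median in which every phantom sits at $+\infty$ or $-\infty$; such a median always returns a fixed order statistic of the reports, hence an actual agent's location. I expect the main obstacle to be exactly this gluing step — ruling out an output that lies strictly between two consecutive agents. The cleanest way I see to close it is to suppose such an interior output $W$ existed, pick one agent strictly to its left and one strictly to its right, and apply the clamp structure to each: the left agent forces $W$ to equal her controllable lower bound and the right agent forces $W$ to equal her controllable upper bound, and translation invariance then makes these two demands incompatible unless $W$ coincides with a report. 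For $n=2$ this obstacle collapses to the one-line case check above, which is why the one-dimensional statement is genuinely easy; the bookkeeping is the only thing that grows with $n$.
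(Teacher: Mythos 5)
Your opening reduction is sound up to a point: strategyproofness alone does give $|W_i(t)-t| = \min_{w\in R_i}|w-t|$, so $W_i(t)$ is a nearest point of the range $R_i$ (and one can even check $R_i$ must be closed). The genuine gap is the next claim, that $R_i$ is an interval, which you justify by ``by Lemma 1 this map is continuous.'' Lemma 1 asserts only that the \emph{cost} $u(A_i)=\|f(A_i,\bm{A}_{-i})-A_i\|$ is $1$-Lipschitz; it says nothing about continuity of the \emph{location} map $t\mapsto f(t,\bm{A}_{-i})$, and single-agent strategyproofness does not imply it. Concretely, fix $\bm{A}_{-i}$ and consider a map whose range is the two-point set $\{0,1\}$, with $W_i(t)$ the nearer of the two points (tie at $t=1/2$ broken arbitrarily): this is strategyproof for agent $i$ alone, its cost $\min(|t|,|1-t|)$ is $1$-Lipschitz exactly as Lemma 1 requires, yet $W_i$ jumps at $1/2$ and is not a clamp. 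So a priori $R_i$ is an arbitrary closed set and $W_i$ is a step-like projection with jumps at midpoints of gaps; the clamp formula $W_i(t)=\max(m_-,\min(t,m_+))$, on which your entire two-agent computation with $a(y)=y+\alpha$, $b(y)=y+\beta$ rests, is unjustified. Ruling out gaps in the range cannot be done agent-by-agent --- it is essentially the whole content of the theorem, and it requires playing unanimity, translation invariance, and the \emph{other} agents' strategyproofness against one another. That is what the paper's proof does directly: it assumes an output $W$ off all agents, moves one agent at a time by a tiny $\epsilon$, uses Lemma 1 plus two strategyproofness inequalities to pin the new output at exactly $W$, iterates over all $n$ agents, and then contradicts translation invariance, which demands the output $W+\epsilon$; this works uniformly in $n$ with no structural characterization of $f$ at all.

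The remaining parts inherit the same problem or are asserted rather than proved. For general $n$ you state that ``mutual consistency of all $n$ clamp pictures forces a generalized median in which every phantom sits at $\pm\infty$'' --- that is precisely the statement needing proof, and your sketch for excluding an interior output again invokes the clamp structure. Note also that your claimed classification for $n=2$ (two dictatorships, $\min$, $\max$) is strictly stronger than the theorem, which only asserts the pointwise fact $f(\bm{A})\in\{A_1,\dots,A_n\}$; even if true, establishing it demands the gap-free range argument you have not supplied, whereas the paper's local perturbation argument proves exactly the pointwise claim and nothing more.
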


\begin{proof}
Without loss of generality, we let $n$ agents be $A_1,...,A_n$ and assume $A_1\le A_2 \le ... \le A_n$. If all $A_i$ are in the same location, according to unanimity, the output is $A_i$.

Let $W = f(\bm{A})$. We divide this into 3 different cases. We only need to prove that the output cannot locate at these three areas. Using proof by contradiction, we assume that there can be a situation that $W$ does not locate at any agents.

\textbf{Case 1}, $W < A_1$: As is shown in Fig 1, we can find a positive tiny $\epsilon$ that $\epsilon \ll d(W,A_1)$, e.g., $\epsilon < 0.01 \cdot d(W,A_1)$ and according to Lemma 1, we have $f(A_1+\epsilon, \bm{A_{-1}}) < A_1$, otherwise it will contradict with
$\| u(A_1) - u(A_1+\epsilon) \| \le |A_1 - (A_1+\epsilon)\| = \epsilon$.

Therefore we must have $f(A_1+\epsilon, \bm{A_{-1}}) \le W$, otherwise agent with location $A_1$ gain from misreporting her location as $A_1+\epsilon$. In the same way, we also must have $f(A_1+\epsilon, \bm{A_{-1}}) \ge W$, otherwise agent with location $A_1+\epsilon$ can gain from misreporting her location as $A_1$ (fix other agents in $\bm{A_{-1}}$). This means that $f(A_1+\epsilon, \bm{A_{-1}}) = f(\bm{A}) = W$.

Let $\bm{A_{-i}^\epsilon}$ denotes $(A_1+\epsilon,...,A_{i-1}+\epsilon,A_{i+1},...,A_n)$ (of course $1\le i \le n$ and when $i=n$ we say it denotes $(A_1+\epsilon,...,A_{n-1}+\epsilon$, when $i=1$ we say it denotes $(A_2,...,A_n)$.
In the same way, when $i$ increases from 1 to $n$, we have $f(A_n+\epsilon, \bm{A_{-n}^\epsilon}) = f(A_{n-1}+\epsilon, \bm{A_{-(n-1)}^\epsilon}) = ... = f(\bm{A}) = W$ and at last get $f(\bm{A} + \epsilon) = W$. However, according to the translational-invariance, we must have $f(\bm{A}) = W + \epsilon$, which leads to a contradiction. Therefore the output cannot satisfy $W < A_1$.

\textbf{Case 2}, $W > A_n$: This is completely symmetrical with the first case and we can use the same method by adding a tiny $\epsilon$ ($\epsilon \ll d(W,A_n)$) to all agents.

\textbf{Case 3}, $\exists i\in [1, n-1], A_i < W < A_{i+1}$: We can still add all the agents a tiny $\epsilon \ll \min\{ d(W,A_i), d(W,A_{i+1}) \}$. When adding $A_j$ with $j\le i$, we can refer to case 1's proof and when adding $A_j$ with $j>i$, we can refer to case 2's method. 

Notice that if $A_i$ = $A_{i+1}$, then there cannot be $A_i<W<A_{i+1}$, thus these 3 cases include all the areas except the locations of the agents.
Therefore, any such strategyproof mechanism cannot output a location $W\ne A_i$ for any $i$.
We prove this theorem.\qed
\end{proof}

Therefore we can know that in one-dimensional space (including all the $L_p$ space for any positive integer $p$), the output must be one agent's location. But in multi-dimensional space, the result is different. Although output still can be one agent's location, it's not necessary.

\subsection{Multi-Dimensional $L_2$ Situation}

There are two reasons why we select $L_2$ space (i.e., Euclidean space). The first one is that it has some very friendly properties. For example, in an Euclidean space, any right triangle must satisfy that its hypotenuse is the (only) longest side. But in other space such as $L_3$ space, this rule may not hold. Here, the length of side is the distance between the two points in $L_p$ space. And the second reason is that the Euclidean space is the most common and most used space. In this part, we will study the result in Euclidean space.

\begin{lemma}
In any Euclidean space with $n$ agents $A_i$ ($i\in N$), let the output of a deterministic unanimous translation-invariant strategyproof mechanism be $W$, then for $\forall A_i'$ on the segment between $A_i$ and $W$ (including $A_i$ and $W$), we have
$$ f(A_i', \bm{A_{-i}}) = W,$$
which means that if one agent move her location close to the output along the segment, the output does not change.
\end{lemma}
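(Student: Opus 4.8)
The plan is to fix the reports $\bm{A_{-i}}$ of all other agents and track the output as agent $i$'s report slides along the segment toward $W$. Write $W' = f(A_i', \bm{A_{-i}})$, set $D = d(A_i, W)$, and parametrize the segment by $A_i' = A_i + t(W - A_i)$ with $t \in [0,1]$, so that $d(A_i, A_i') = tD$ and $d(A_i', W) = (1-t)D$. The two endpoints are immediate: at $t=0$ there is nothing to prove, and at $t=1$ (where $A_i' = W$) strategyproofness of the agent truly at $W$, who could deviate to $A_i$, forces $d(W', W) \le d(W, W) = 0$, hence $W' = W$. So the real content is the interior $0 < t < 1$.

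First I would invoke strategyproofness twice, once for each of the two true-location scenarios. Applying it to the agent truly at $A_i$, who could deviate to $A_i'$, yields $d(W, A_i) \le d(W', A_i)$, i.e. $d(W', A_i) \ge D$. Applying it to the agent truly at $A_i'$, who could deviate to $A_i$, yields $d(W', A_i') \le d(W, A_i') = (1-t)D$. These are the only two uses of the mechanism's hypotheses, and notably no continuity or connectedness argument is required.

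Next I would sandwich the quantity $d(W', A_i)$. The triangle inequality together with the second bound gives
$$ d(W', A_i) \le d(W', A_i') + d(A_i', A_i) \le (1-t)D + tD = D, $$
which combined with $d(W', A_i) \ge D$ forces $d(W', A_i) = D$ and collapses every inequality in the chain to an equality. The equality $d(W', A_i) = d(W', A_i') + d(A_i', A_i)$ is precisely the statement that $A_i'$ lies on the segment $[A_i, W']$; since $A_i' \neq A_i$ for interior $t$, this pins $W'$ onto the ray from $A_i$ through $A_i'$, and the equality $d(W', A_i) = D$ fixes its distance from $A_i$. Because $A_i'$ sits on the segment $[A_i, W]$, that ray is exactly the ray from $A_i$ through $W$, on which $W$ is the unique point at distance $D$; therefore $W' = W$.

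The main obstacle, and the only place where the Euclidean (strictly convex) structure is essential, is the equality case of the triangle inequality used in the sandwich: I need that saturating it forces collinearity with $A_i'$ between the other two points. This is exactly the property the paper flags as failing in $L_1$ and $L_\infty$, where distinct directions can still give $\|x\| + \|y\| = \|x+y\|$, so the argument genuinely relies on $1 < p < \infty$ and holds cleanly for $p = 2$. Everything else is routine bookkeeping with the two strategyproofness inequalities.
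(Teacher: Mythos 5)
Your proposal is correct and takes essentially the same route as the paper: you derive exactly the same two strategyproofness inequalities ($d(W',A_i)\ge d(W,A_i)$ from the agent truly at $A_i$, and $d(W',A_i')\le d(W,A_i')$ from the agent truly at $A_i'$), and your triangle-inequality sandwich with its equality case is precisely the analytic justification of the paper's geometric claim that the spheres $O_1$ and $O_2$ are internally tangent and so meet only at $W$. If anything, your version makes the strict-convexity step explicit where the paper leaves it implicit in the phrase ``the only common point between the two spheres,'' which also cleanly explains the paper's follow-up remark that the lemma extends beyond $p=2$.
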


\begin{proof}
Obviously we only need to care the situation that $A_i' \ne A_i$.
Considering the property of strategyptoofness and let $W'=f(A_i', \bm{A_{-i}})$, we have 
$$
\left\{
\begin{aligned}
d(A_i,W) \le d(A_i, W')\\
d(A_i', W') \le d(A_i', W)
\end{aligned}
\right.
$$
Draw the spheres (if $m$ = 2 then circles and if $m>3$ then $m$-spheres) $O_1$ and $O_2$ with $A_i$ and $A_i'$ as centers, $d(A_i,W)$ and $d(A_i',W)$ as 
radius, respectively. The first inequality implies that $W'$ cannot be inside of $O_1$ and the second implies that $W'$ cannot be outside of $O_2$. Therefore, $W' = W$, which is the only common point between the two spheres (circles).
\qed
\end{proof}

In fact, this lemma holds when $p>2$, but at this time, what we draw is not 2 spheres any more, but 2 inscribed similar Enclosed ellipsoid on which the distance between a point and center is a constant in $L_p$ space.

\begin{lemma}
In any Euclidean space with 2 agents $A$ and $B$, if the output of a deterministic unanimous translation-invariant strategyproof mechanism is on the line $AB$, then it can only locate at $A$ or $B$.
\end{lemma}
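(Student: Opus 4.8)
The plan is to argue by contradiction: assume $W=f(A,B)$ lies on line $AB$ with $W\neq A$ and $W\neq B$. Using translation invariance I may move $W$ to the origin, so that $A=a\,\mathbf{e}$ and $B=b\,\mathbf{e}$ for a unit vector $\mathbf{e}$ along $AB$ and nonzero scalars $a,b$, with $f(A,B)=\mathbf{0}$; the claim is exactly that $a=0$ or $b=0$. I split into two geometric cases according to the sign of $ab$: the two agents lie on the same side of $W$ (output outside the segment $AB$), or on opposite sides (output strictly between). The one-dimensional Theorem~1 is the right intuition, but it cannot be quoted directly, since the mechanism restricted to on-line profiles need not keep the facility on the line; hence I work with the primitives Lemma~1 (continuity) and Lemma~2 (moving an agent toward $W$ along the segment fixes the output), together with translation invariance.

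For the same-side case (say $0<a<b$, so both agents sit to the right of $W$) I mimic the shifting argument of Theorem~1. Choose $0<\epsilon<a$ and slide both agents left by $\epsilon$, one at a time. Sliding agent $1$ from $a\mathbf{e}$ to $(a-\epsilon)\mathbf{e}$ moves it along the segment toward $W$, so by Lemma~2 the output stays $\mathbf{0}$; sliding agent $2$ from $b\mathbf{e}$ to $(b-\epsilon)\mathbf{e}$ in the new profile again moves it toward $W$, so Lemma~2 again keeps the output at $\mathbf{0}$. Thus $f((a-\epsilon)\mathbf{e},(b-\epsilon)\mathbf{e})=\mathbf{0}$, whereas translation invariance forces this value to be $f(A,B)-\epsilon\mathbf{e}=-\epsilon\mathbf{e}\neq\mathbf{0}$, a contradiction. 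The symmetric subcase with both agents on the other side is identical.

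The strictly-between case is the main obstacle, and I expect it to be the crux. Here the two agents lie on opposite sides of $W$, so every common translation pushes one of them away from $W$; for such an away-shift Lemma~2 gives nothing, and one checks that strategyproofness alone leaves the perturbed facility free to drift off the line, so the shifting argument above breaks down. Instead I first collapse each agent onto $W$ by Lemma~2: writing $A=-p\,\mathbf{e}$ and $B=q\,\mathbf{e}$ with $p,q>0$, this yields $f(\mathbf{0},q\mathbf{e})=\mathbf{0}$ and $f(-p\mathbf{e},\mathbf{0})=\mathbf{0}$. Translation invariance applied to the second identity gives $f(\mathbf{0},p\mathbf{e})=p\mathbf{e}$. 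Now hold agent $1$ fixed at the origin and consider agent $2$'s cost $u_2(y)=\|f(\mathbf{0},y\mathbf{e})-y\mathbf{e}\|$: the two identities give $u_2(q)=\|\mathbf{0}-q\mathbf{e}\|=q$ and $u_2(p)=\|p\mathbf{e}-p\mathbf{e}\|=0$, while Lemma~1 demands $|u_2(q)-u_2(p)|\le|q-p|$, i.e., $q\le|q-p|$. When $p\le q$ this reads $q\le q-p$, forcing $p\le 0$ and contradicting $p>0$. Since anonymity is not assumed I cannot simply relabel the agents to guarantee $p\le q$; in the remaining range $p>q$ I run the mirror-image computation with agent $2$ held fixed at the origin, using $f(-p\mathbf{e},\mathbf{0})=\mathbf{0}$ and the translate $f(-q\mathbf{e},\mathbf{0})=-q\mathbf{e}$ of the first identity, which gives the symmetric inequality $p\le|q-p|$ and hence $q\le 0$. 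Either way a contradiction results, so $W$ cannot lie strictly between $A$ and $B$, which completes the proof.
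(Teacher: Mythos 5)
Your proof is correct, and while its skeleton mirrors the paper's (same case split: facility outside the segment versus strictly between the agents), your handling of the crux case is genuinely different. The outside-the-segment argument is exactly the paper's: slide both agents by $\epsilon$ toward $W$ via Lemma~2 and contradict translation invariance. For the strictly-between case the paper also starts by collapsing $A$ onto $W$ (so $f(W,B)=W$), but then exhibits a single explicit misreport: it moves $B$ outward to $B'$ with $d(B',W)=d(A,B)$, observes that translation invariance pins $f(W,B')=W+(W-A)$, and claims agent $B$ profits by reporting $B'$. In your coordinates ($A=-p\,\mathbf{e}$, $B=q\,\mathbf{e}$, $W=\mathbf{0}$) that deviation moves the facility to $p\,\mathbf{e}$ and is profitable only when $|q-p|<q$, i.e.\ $p<2q$; for $p\ge 2q$ the paper's stated contradiction evaporates and one would need the mirror construction, which the paper does not spell out. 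Your route --- collapsing each agent in turn to get $f(\mathbf{0},q\mathbf{e})=\mathbf{0}$ and $f(-p\mathbf{e},\mathbf{0})=\mathbf{0}$, translating to obtain $f(\mathbf{0},p\mathbf{e})=p\,\mathbf{e}$ and $f(-q\mathbf{e},\mathbf{0})=-q\,\mathbf{e}$, and then applying the Lipschitz bound of Lemma~1 to extract $q\le|q-p|$ and, in the range $p>q$, the mirror inequality $p\le|q-p|$ --- encodes the same underlying strategyproofness violation but covers both parameter ranges, and you were right that without anonymity you cannot simply relabel to assume $p\le q$. Net effect: the paper's version, where it applies, displays the profitable deviation more vividly, while your Lemma~1 formulation is more systematic and in fact closes a small gap left implicit in the paper's Case~3.
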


\begin{proof}
Similar to the proof of Theorem 1, we set the output $W$ and divide it into three cases. 

If $\exists k> 0$ that $\overrightarrow{WA} = k\cdot \overrightarrow{AB}$, then according to lemma 2, we move $A$ and then $B$ towards $W$ with a tiny distance $\epsilon$. In this period, the output is still $W$, which is contradict with translational-invariance.

If $\exists k> 0$ that $\overrightarrow{WB} = k\cdot \overrightarrow{BA}$, then according to lemma 2, we move $B$ and then $A$ towards $W$ with a tiny distance $\epsilon$. In this period, the output is still $W$, which is contradict with translational-invariance.

If $\exists k> 0$ that $\overrightarrow{AW} = k\cdot \overrightarrow{WB}$, then according to lemma 2, we move $A$ to $W$ and have $f(W,B) = W$. Next we move $B$ away from $W$ to $B'$ so that $d(B',W)=d(A,B)$. Therefore we notice that $f(W,B')\ne W+(W-A)$ because otherwise $B$ can gain from misreporting $B'$, and this is contradict with translational-invariance.

Therefore, the lemma is proved.
\qed
\end{proof}

\begin{lemma}
In any Euclidean space with 2 agents $A$ and $B$, the output $W$ can never satisfy that
$0^\circ<\angle AWB<90^\circ$.
\end{lemma}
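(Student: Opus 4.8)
The plan is to argue by contradiction: assume $0^\circ<\angle AWB<90^\circ$ for some profile $(A,B)$ with $W=f(A,B)$, and derive a contradiction with Lemma 3 (the output cannot lie on line $AB$ except at $A$ or $B$). The first thing I would record is the reformulation of single-agent strategyproofness already implicit in Lemmas 1 and 2: fixing $B$ and writing $R_B=\{f(A',B):A'\in\mathbb{R}_m\}$ for the range, strategyproofness forces $f(A',B)$ to be a point of $R_B$ nearest to $A'$, since otherwise the agent at $A'$ could misreport into the preimage of a nearer range point. Thus $f(\cdot,B)$ is a nearest-point selection onto the \emph{fixed} set $R_B$, and symmetrically $f(A,\cdot)$ is a nearest-point selection onto the fixed set $R_A$. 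Two membership facts drive everything: by unanimity $B\in R_B$ and $A\in R_A$ (take the two reports equal), and of course $W\in R_A\cap R_B$.

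From the nearest-point property I would extract the basic metric inequalities. Because $B\in R_B$ and $W$ is a nearest point of $R_B$ to $A$, we get $d(A,W)\le d(A,B)$, and symmetrically $d(B,W)\le d(A,B)$. Expanding $d(A,B)^2=d(A,W)^2+d(B,W)^2-2\,\overrightarrow{WA}\cdot\overrightarrow{WB}$ and feeding $A\in R_A$ and $B\in R_B$ into the two nearest-point conditions yields the sharper bounds
$$ \overrightarrow{WA}\cdot\overrightarrow{WB}\le \tfrac12\,d(A,W)^2, \qquad \overrightarrow{WA}\cdot\overrightarrow{WB}\le \tfrac12\,d(B,W)^2. $$
These already confine the acute regime, but I expect them to be insufficient on their own: the ``equilateral'' configuration $d(A,W)=d(B,W)=d(A,B)$ (angle $60^\circ$) saturates both, so the contradiction cannot come from these two range points alone and must use the full range together with continuity.

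To finish, I would exploit continuity of the mechanism (which the paper establishes for these mechanisms) to trace an output path inside the fixed set $R_A$. Moving the second report $B'$ continuously from $B$ to $A$ moves the output $f(A,B')=\mathrm{proj}_{R_A}(B')$ continuously from $W$ to $A$ inside $R_A$. The key observation is that, since the angle at $W$ is acute, $\overrightarrow{WA}\cdot\overrightarrow{WB}>0$, so sliding from $W$ slightly toward $A$ strictly \emph{decreases} the distance to the original point $B$; hence any point of this $R_A$-path that lay near $W$ on the open segment toward $A$ would be strictly closer to $B$ than $W$, contradicting $W=\mathrm{proj}_{R_A}(B)$. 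I would convert this into a genuine contradiction through an intermediate-value argument on the signed distance of the traced output to the line $AB$: forcing the path from $W$ to $A$ to meet that line at an interior point produces a profile whose output lies on line $AB$ strictly between the two agents, contradicting Lemma 3.

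The main obstacle, and the step I would spend the most care on, is precisely this last one. Single ``off-ray'' deviations are governed only by strategyproofness \emph{inequalities}, not equalities, so unlike the moves of Lemma 2 they do not pin the output; moreover every ray-move of Lemma 2 preserves the angle at $W$, so one cannot simply reduce to the collinear case by a safe move. Controlling where the output travels under such moves is exactly what forces the use of continuity and of the nearest-point projection structure, and making the crossing argument rigorous---guaranteeing that the traced output path genuinely meets line $AB$ away from $A$ and $B$---is where the real work lies.
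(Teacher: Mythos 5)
There is a genuine gap, and it sits exactly where you place ``the real work.'' Your preliminary observations are fine: the range reformulation ($f(\cdot,B)$ selects a nearest point of $R_B$, $B\in R_B$ by unanimity, hence $d(A,W)\le d(A,B)$ and the dot-product bounds), and your diagnosis that these alone cannot rule out the acute case is correct. But the finishing argument rests on two claims that fail. First, you invoke ``continuity of the mechanism (which the paper establishes),'' but the paper's Lemma 1 establishes only that the \emph{cost} $u(A_i)=\|f(A_i,\bm{A}_{-i})-A_i\|$ is 1-Lipschitz in $A_i$; it does \emph{not} establish continuity of the output map $f$ itself, and a nearest-point selection onto a nonconvex range $R_A$ is generically discontinuous, so the ``output path traced continuously from $W$ to $A$ inside $R_A$'' is unjustified. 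Second, even granting a continuous path, the intermediate-value step does not produce a contradiction: the path from $W$ (off line $AB$) to $A$ (on it) may meet line $AB$ only at its terminal point $A$, where Lemma 3 says nothing; and in dimension $m\ge 3$ a line does not separate space, so there is no ``signed distance'' to cross --- yet the statement is claimed for all $m>1$. Since you explicitly defer this step rather than close it, the proposal is a plan with its central mechanism missing.

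It is instructive to compare with how the paper avoids both obstacles: it never needs continuity of $f$, because it \emph{constructs} the output path exactly, using translation invariance plus Lemma 2. Setting $A_1=W_0$ and $\overrightarrow{B_0B_1}=\overrightarrow{A_0W_0}$, translation invariance gives $f(A_x,B_x)=W_x$ with $W_x=W_0+x\cdot\overrightarrow{A_0W_0}$, and Lemma 2 (sliding $A_x$ along the segment to $W_x$, on which $A_1$ lies) pins $f(A_1,B_x)=W_x$: the output path is a straight segment by fiat, not by a limiting argument. The contradiction is then local and dimension-free: since $\angle A_1W_1B_1=\angle A_0W_0B_0<90^\circ$, moving from $W_1$ back along the path initially decreases the distance to $B_1$, so some $x<1$ gives $d(B_1,W_x)<d(B_1,W_1)$, and the agent at $B_1$ profitably misreports $B_x$. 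Note that translation invariance is essential to this construction, whereas your proposal never uses it; any correct proof here must, since the statement is false for merely strategyproof unanimous mechanisms without it (constant-style range arguments alone cannot see the angle). If you want to salvage your route, you would need to replace the continuity/IVT step with an invariance-based construction of this kind.
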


\begin{figure}
\centering
\vspace{-0.6cm}
\includegraphics[scale=0.4]{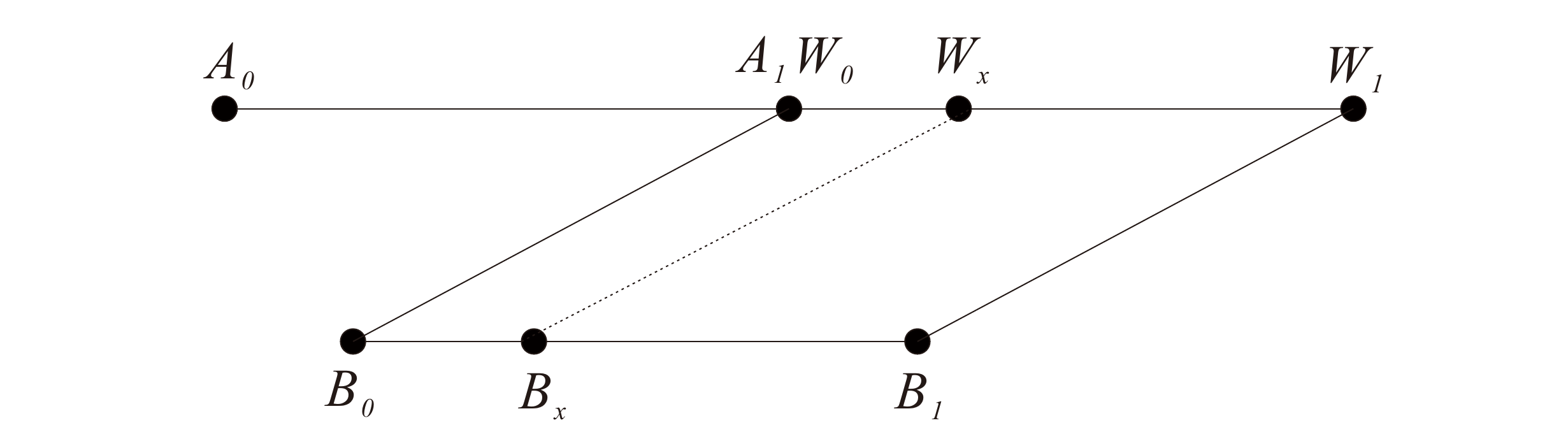}
\caption{Proof of Lemma 4} \label{fig2}
\vspace{-0.8cm}
\end{figure}

\begin{proof}
For convenience, we use term $A_x$ and $B_x$ ($x\in[0,1]$) and $W_x = f(A_x,B_x)$. Let $\angle A_0W_0B_0 \in (0^\circ , 90^\circ)$. Also let $\overrightarrow{B_0B_1} = \overrightarrow{A_0W_0}$ and $A_1 = W_0$. Therefore we have $\overrightarrow{W_0W_1} = \overrightarrow{B_0B_1}$. Here $\overrightarrow{A_0A_x} = x\cdot \overrightarrow{A_0A_1}$, $\overrightarrow{B_0B_x} = x\cdot \overrightarrow{B_0B_1}$ and $\overrightarrow{W_0W_x} = x\cdot \overrightarrow{W_0W_1}$. These are all drawn in Fig 2.

According to Lemma 2, because $f(A_x, B_x) = W_x$, we can get $f(A_1, B_x) = W_x$. This means that if we fix $A$ in $A_1$ and move $B$ from $B_0$ to $B_1$ on a straight line, then $W$ will move from $W_0$ to $W_1$ on a straight line. Because of strategyproofness, for $\forall x\in [0,1]$, we have $d(B_1, W_x)\ge d(B_1,W_1)$. However, since we know that $\angle A_1W_1B_1 = \angle A_0W_0B_0 < 90^\circ$, then there must exist $x<1$ so that $d(B_1, W_x) < d(B_1,W_1)$. Therefore the lemma is proved.
\qed
\end{proof}

\begin{lemma}
In any Euclidean space with 2 agents $A$ and $B$, the output $W$ can never satisfy that
$90^\circ<\angle AWB<180^\circ$.
\end{lemma}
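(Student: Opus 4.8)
The plan is to re-use verbatim the family of profiles from the proof of Lemma 4, but to read off the contradiction from the opposite end of the path. Suppose for contradiction that some profile $(A_0,B_0)$ has output $W_0=f(A_0,B_0)$ with $\angle A_0W_0B_0=\theta\in(90^\circ,180^\circ)$. Exactly as before, pin the fixed agent onto the output by setting $A_1=W_0$ and translate the whole profile by $d:=\overrightarrow{A_0W_0}=W_0-A_0$; writing $(A_x,B_x)=(A_0+xd,\,B_0+xd)$ for $x\in[0,1]$, translational invariance gives $f(A_x,B_x)=W_x:=W_0+xd$, so that the output always lies on the straight segment from $W_0$ to $W_1$ in direction $d$.

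Next I would apply Lemma 2 in the same manner as in Lemma 4: since $A_1=W_0$ is the point of the segment $[A_x,W_x]$ at parameter $1-x$, dragging agent $A$ from $A_x$ onto $A_1$ leaves the output unchanged, whence $f(A_1,B_x)=W_x$ for every $x\in[0,1]$. Thus, with agent $A$ held fixed at $A_1=W_0$, letting $B$ slide from $B_0$ to $B_1$ moves the facility from $W_0$ to $W_1$ along the direction $d$.

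The departure from Lemma 4 is that I invoke strategyproofness at the $x=0$ end instead of the $x=1$ end. The genuine profile is $(A_1,B_0)$ with output $W_0$, and agent $B$ (truly at $B_0$) can misreport $B_x$ to steer the output to $W_x$; strategyproofness demands $\|B_0-W_0\|\le\|B_0-W_x\|$ for all $x$. But expanding $\|W_x-B_0\|^2=\|W_0-B_0\|^2+2x\,(W_0-B_0)\cdot d+x^2\|d\|^2$ and using $(W_0-B_0)\cdot(W_0-A_0)=\|W_0-A_0\|\,\|W_0-B_0\|\cos\theta$, the coefficient of $x$ equals $2\|W_0-A_0\|\,\|W_0-B_0\|\cos\theta$, which is strictly negative precisely because $\theta$ is obtuse (and all norms are nonzero, $W_0\notin\{A_0,B_0\}$). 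Hence $\|W_x-B_0\|<\|W_0-B_0\|$ for all small $x>0$, so $B$ strictly profits by misreporting, the desired contradiction.

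The only real obstacle is getting the geometry the right way round rather than any hard estimate. Because Lemma 2 forces the output to recede from the fixed agent $A_1$ along $\overrightarrow{A_0W_0}$, the acute case can only bite at the far endpoint $B_1$ (Lemma 4), whereas the obtuse case can only bite at the near endpoint $B_0$, where the facility initially advances toward the stationary agent; choosing that endpoint correctly is what makes the sign of $\cos\theta$ decisive. Together with Lemmas 3 and 4, this rules out every configuration except $\angle AWB=90^\circ$, i.e.\ $\overrightarrow{WA}\cdot\overrightarrow{WB}=0$, which is the intended sphere characterization.
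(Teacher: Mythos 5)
Your proof is correct and is essentially the paper's own argument: the same translated family $(A_x,B_x)$ with $A_1=W_0$, the same application of Lemma 2 to obtain $f(A_1,B_x)=W_x$, and strategyproofness invoked for agent $B$ at the $x=0$ endpoint (the paper's $d(B_0,W_x)\ge d(B_0,W_0)$). Your explicit expansion of $\|W_x-B_0\|^2$ with the $\cos\theta$ coefficient merely makes quantitative the paper's geometric observation that $\angle W_1A_1B_0<90^\circ$, so the two proofs coincide.
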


\begin{proof}
Similar to the proof of Lemma 4, we use term $A_x$ and $B_x$ ($x\in[0,1]$) and $W_x = f(A_x,B_x)$. Let $\angle A_0W_0B_0 \in (90^\circ , 180^\circ)$. Also let $\overrightarrow{B_0B_1} = \overrightarrow{A_0W_0}$ and $A_1 = W_0$. Therefore we have $\overrightarrow{W_0W_1} = \overrightarrow{B_0B_1}$. Here $\overrightarrow{A_0A_x} = x\cdot \overrightarrow{A_0A_1}$, $\overrightarrow{B_0B_x} = x\cdot \overrightarrow{B_0B_1}$ and $\overrightarrow{W_0W_x} = x\cdot \overrightarrow{W_0W_1}$.

According to Lemma 2, because $f(A_x, B_x) = W_x$, we can get $f(A_1, B_x) = W_x$. This means that if we fix $A$ in $A_1$ and move $B$ from $B_0$ to $B_1$ on a straight line, then $W$ will move from $W_0$ to $W_1$ on a straight line. Because of strategyproofness, for $\forall x\in [0,1]$, we have $d(B_0, W_x)\ge d(B_0,W_0)$. However, since we know that $\angle W_1A_1B_0 < 90^\circ$, then there must exist $x>0$ so that $d(B_0, W_x) < d(B_0,W_0)$. Therefore the lemma is proved.
\qed
\end{proof}

\begin{theorem}
In any Euclidean space with 2 agents $A$ and $B$, the output $W$ of a deterministic unanimous translation-invariant strategyproof mechanism $f$ must satisfy
$$\overrightarrow{AW}\cdot \overrightarrow{BW} = 0,$$
which means that $W$ must locate on a sphere with $AB$ as the diameter.
\end{theorem}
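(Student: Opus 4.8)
The plan is to prove Theorem 2 essentially as an immediate corollary of the angle lemmas just established, together with the degenerate-position lemma. First I would observe that for any placement of the two agents $A$ and $B$ with $A \neq B$, the output $W$ determines an angle $\angle AWB$ (when $W$ is not collinear with $A$ and $B$) or lies on the line $AB$ (the degenerate case). The statement $\overrightarrow{AW}\cdot \overrightarrow{BW} = 0$ is exactly the assertion that this angle is a right angle, i.e. $\angle AWB = 90^\circ$, so the whole theorem is a case analysis over the possible values of the angle.

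The key steps, in order, are as follows. I would first dispose of the collinear case: if $W$ lies on line $AB$, then by Lemma 3 it must coincide with $A$ or $B$. But if $W = A$ then $\overrightarrow{AW} = \bm{0}$, and if $W = B$ then $\overrightarrow{BW} = \bm{0}$; in either situation the dot product $\overrightarrow{AW}\cdot\overrightarrow{BW} = 0$ holds trivially, so these outputs are consistent with the conclusion. Next, assuming $W$ is \emph{not} on line $AB$, the angle $\angle AWB$ is well defined and lies strictly in $(0^\circ,180^\circ)$. Lemma 4 rules out the range $(0^\circ,90^\circ)$ and Lemma 5 rules out the range $(90^\circ,180^\circ)$. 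Since these two open intervals together with the single point $90^\circ$ exhaust $(0^\circ,180^\circ)$, the only surviving possibility is $\angle AWB = 90^\circ$, which is precisely $\overrightarrow{AW}\cdot\overrightarrow{BW} = 0$. Finally I would note the geometric reformulation: the locus of points $W$ with $\overrightarrow{AW}\cdot\overrightarrow{BW}=0$ is exactly the sphere having the segment $AB$ as a diameter (Thales' theorem in the multidimensional Euclidean setting), which gives the stated geometric description; the degenerate outputs $W=A$ and $W=B$ are the two endpoints of that diameter and so lie on the sphere as well.

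I do not expect any serious obstacle here, since the heavy lifting has already been done by Lemmas 3, 4, and 5; the theorem is a bookkeeping argument that stitches them together. The one point requiring a little care is making sure the case split is genuinely exhaustive and that the boundary cases are handled cleanly: I must explicitly include the unanimous case $A = B$ (where $W = A = B$ by unanimity and both vectors vanish), the collinear-but-distinct case routed through Lemma 3, and confirm that Lemmas 4 and 5 really do cover every non-right, non-collinear angle without gap or overlap. The subtlest conceptual check is that Lemma 3's conclusion ($W \in \{A,B\}$) is compatible with, rather than a counterexample to, the dot-product equation — which it is, precisely because one of the two vectors becomes the zero vector. Once these boundary cases are verified, the conclusion follows with no further computation.
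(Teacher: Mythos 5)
Your proposal is correct and matches the paper's own argument, which derives Theorem 2 directly from Lemmas 3, 4, and 5 exactly as you do. Your version is in fact more careful than the paper's one-line proof, since you explicitly handle the case split's exhaustiveness, the unanimous case $A=B$, and the observation that $W\in\{A,B\}$ satisfies the dot-product equation because one vector vanishes.
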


According to Lemma 3, 4 and 5, Theorem 2 is obvious. Notice this theorem fits any dimension $m>1$.
Maybe intuitively we can guess that this can be extended to more $n$-agent situation, but we will show that there is an counter-example for any $m>2$ in Section 5. Here is the conjecture that does not hold for $m>2$.



\begin{conjecture}
In any Euclidean space with $n$ agents $A_1,...,A_n$, the output $W$ of a deterministic unanimous translation-invariant strategyproof mechanism $f$ must satisfy that
$$ \exists i, j\in N, \overrightarrow{A_iW}\cdot \overrightarrow{A_jW} = 0.$$
\end{conjecture}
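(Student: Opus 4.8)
The plan is to attempt a reduction of the $n$-agent claim to the already-established 2-agent Theorem~2 by isolating a pair of agents. Fix any two indices $i,j\in N$ and apply Lemma~2 repeatedly to each agent $k\notin\{i,j\}$: move $A_k$ along the segment $A_kW$ all the way to $W$. Since Lemma~2 guarantees that each such move leaves the output unchanged, we arrive at a profile in which every agent except $i$ and $j$ sits exactly at $W$. If the induced two-variable map $g(A_i,A_j):=f(A_i,A_j,W,\dots,W)$ were a legitimate 2-agent unanimous, translation-invariant, strategyproof mechanism, then Theorem~2 would immediately give $\overrightarrow{A_iW}\cdot\overrightarrow{A_jW}=0$ and the conjecture would follow at once.

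So I would first check which hypotheses survive this restriction. Strategyproofness is inherited for free, since any profitable deviation for $i$ or $j$ in $g$ is also a deviation in $f$. The hard part will be translation invariance: translating only $A_i$ and $A_j$ is \emph{not} a global translation of the $n$-agent profile, because the pinned agents stay fixed instead of moving together with the pair. This is precisely the ingredient the two-agent arguments depend on — in the proofs of Lemmas~4 and~5, the step $\overrightarrow{W_0W_1}=\overrightarrow{B_0B_1}$ is obtained by translating the \emph{entire} profile, and that step has no analogue once $n-2$ agents are pinned at a fixed point. Thus the clean reduction collapses at exactly the property that drives the 2-agent proof.

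As a fallback I would try the direct route used in Lemmas~4 and~5: assume every pairwise angle $\angle A_iWA_j$ avoids $90^\circ$ and run the linear-trajectory argument on a well-chosen pair to force a contradiction. Here a second, geometric obstacle appears. In dimension $m\ge 2$ one can place the vectors $\overrightarrow{WA_1},\dots,\overrightarrow{WA_n}$ inside a narrow cone, so that all pairwise angles are strictly acute and no orthogonal pair exists; pure geometry therefore does not force a right angle, so any correct proof must exploit the mechanism in an essentially global way rather than through independent pairwise constraints. (For $m=1$ the claim is degenerate: Theorem~1 gives $W=A_i$ for some $i$, whence $\overrightarrow{A_iW}=0$ and the dot product vanishes trivially.)

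I expect this last point to be not merely the main obstacle but a genuine barrier for $m>2$: the coupling among all $n$ coordinates cannot be split into separate pairwise right-angle conditions, and the general median mechanism of Section~5 is a bona fide unanimous, translation-invariant, strategyproof mechanism whose output realizes exactly such an acute-cone configuration. The honest conclusion of this plan is therefore that the reduction can succeed only in regimes where neither the geometry nor the mechanism leaves room to evade orthogonality, while for $m>2$ the statement is false — which is why it is recorded here as a conjecture and refuted, rather than proved, in Section~5.
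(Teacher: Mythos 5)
Your overall verdict is the correct one and coincides with the paper's own treatment: this statement appears in the paper only as Conjecture 1 and has no proof there — Section 5 (Lemma 6) refutes it for $m>2$ using exactly the mechanism you name, the general median mechanism, with the three-agent profile $A_1(0,1,-1,0,\dots,0)$, $A_2(-1,0,1,0,\dots,0)$, $A_3(1,-1,0,0,\dots,0)$, whose coordinatewise median output is the origin and whose pairwise dot products $\overrightarrow{WA_i}\cdot\overrightarrow{WA_j}$ all equal $-1$. Your diagnosis of why the pin-at-$W$ reduction to Theorem 2 collapses is also sound, with one omission: besides translation invariance, \emph{unanimity} of the induced map $g(A_i,A_j):=f(A_i,A_j,W,\dots,W)$ also fails to be inherited, since $g(C,C)=f(C,C,W,\dots,W)$ need not equal $C$ when the pinned agents sit at $W\ne C$; so two hypotheses of Theorem 2 break, not just one, and strategyproofness is indeed the only property that restricts for free.

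One factual slip is worth correcting: you describe the median mechanism's output as realizing an ``acute-cone'' configuration, but it does not. In the paper's counterexample the three vectors $\overrightarrow{WA_i}$ each have norm $\sqrt{2}$ and pairwise dot product $-1$, so every pairwise angle is $120^\circ$ — orthogonality is evaded by a uniformly \emph{obtuse} configuration, not an acute one. Your narrow-cone observation is true as pure geometry, but whether a unanimous, translation-invariant, strategyproof mechanism can actually place its output so that all agents lie in a narrow cone is a separate (and unaddressed) question; the concrete mechanism-realized escape from the conjecture goes the other way. This does not affect your conclusion, which matches the paper exactly: the statement is false for $m>2$, is refuted rather than proved, and its status for $m=2$ beyond the median mechanism is left open.
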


\subsection{Multi-Dimensional $L_p$ Situation}

As is mentioned in the last part, we know that other $L_p$ space has less friendly properties than $L_2$ space. Therefore the result is not a right angle any more because in other $L_p$ space, a right triangle's hypotenuse may not be the longest side. We use analytical geometry to solve this problem. Obviously we only need to analyze the case that the output $W$ does not locate at $A$ or $B$.

According to translational-invariance, We can assume that $A (-x_{1},...,-x_{m})$, $B (x_{_1},...,x_{_m})$ and $W(y_{1},...,y_{m})$, then the distance between two points such as $A, W$ is 
$d_p(A, W) = \left( \sum_{i=1}^{m}{|x_{i} + y_{i}|^p} \right)^{1/p}$.

\begin{theorem}
In any $L_p$ space ($2<p<\infty$) with 2 agents $A$ and $B$, the output $W$ of a deterministic unanimous translation-invariant strategyproof mechanism $f$ must satisfy
$$
\left\{
\begin{aligned}
\sum_{i=1}^{m}{(x_i+y_i)\cdot (x_i-y_i) \cdot |x_i-y_i|^{p-2}} = 0 \\
\sum_{i=1}^{m}{(x_i-y_i)\cdot (x_i+y_i) \cdot |x_i+y_i|^{p-2}} = 0
\end{aligned}
\right.
$$
\end{theorem}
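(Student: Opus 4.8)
The plan is to mirror the Euclidean argument of Lemmas 4 and 5, replacing the ``angle'' comparison by a first-order (directional-derivative) condition on the $L_p$ distance. Following the paper's convention write $A=(-x_1,\dots,-x_m)$, $B=(x_1,\dots,x_m)$, $W=(y_1,\dots,y_m)$, so that $(W-A)_i=x_i+y_i$ and $(W-B)_i=y_i-x_i$. One first observes that the two displayed equations say exactly that the directional derivative of $d_p(\,\cdot\,,B)^p$ at $W$ in the direction $\overrightarrow{AW}$ vanishes, and symmetrically that the directional derivative of $d_p(\,\cdot\,,A)^p$ at $W$ in the direction $\overrightarrow{BW}$ vanishes; for $p=2$ both collapse to $\overrightarrow{AW}\cdot\overrightarrow{BW}=0$, recovering Theorem 2. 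By the remark preceding the theorem it suffices to treat $W\neq A,B$, and the two equations hold trivially when $W\in\{A,B\}$ since then one of the factors $x_i+y_i$ or $y_i-x_i$ vanishes identically.

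First I would reconstruct the translation configuration of Lemma 4. Put $A_1=W_0$ and choose $B_1$ with $\overrightarrow{B_0B_1}=\overrightarrow{A_0W_0}$, and interpolate $A_x,B_x,W_x$ linearly. Since $\overrightarrow{A_0A_x}=\overrightarrow{B_0B_x}$, the profile $(A_x,B_x)$ is a pure translate of $(A_0,B_0)$, so translational invariance gives $f(A_x,B_x)=W_x$ with $\overrightarrow{W_0W_1}=\overrightarrow{A_0W_0}=\overrightarrow{B_0B_1}$. Because $A_1=W_0$ lies on the segment $A_xW_x$, the $L_p$ form of Lemma 2 (valid for $p>2$, as the paper notes) yields $f(A_1,B_x)=W_x$ for all $x\in[0,1]$: fixing the first agent at $A_1$ and sliding the second from $B_0$ to $B_1$ moves the facility along the segment $W_0W_1$.

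Next I would extract the equality from strategyproofness applied at \emph{both} endpoints of this path. Set $h_t(x)=d_p(B_t,W_x)^p$ for $t\in\{0,1\}$. Strategyproofness for the true location $B_1$ forces $x=1$ to minimize $h_1$ on $[0,1]$, giving the right-endpoint condition $h_1'(1)\le 0$; strategyproofness for the true location $B_0$ forces $x=0$ to minimize $h_0$, giving the left-endpoint condition $h_0'(0)\ge 0$. Since $\tfrac{d}{dx}W_x=\overrightarrow{A_0W_0}$ has $i$-th coordinate $x_i+y_i$, and since the parallelogram relation $\overrightarrow{W_0W_1}=\overrightarrow{B_0B_1}$ makes $W_1-B_1=W_0-B_0$ with $i$-th coordinate $y_i-x_i$, both derivatives equal the \emph{same} quantity $p\sum_i (y_i-x_i)\,|y_i-x_i|^{p-2}\,(x_i+y_i)$. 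The two inequalities then collapse to $\sum_i (x_i+y_i)(x_i-y_i)|x_i-y_i|^{p-2}=0$, i.e.\ the first equation. Interchanging the roles of $A$ and $B$ (fix the second agent at $W_0$ and slide the first) reproduces the argument with $x_i+y_i$ and $y_i-x_i$ exchanged and yields the second equation.

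The step I expect to be the main obstacle is the rigorous justification of the one-sided first-order conditions: one must know that $h_t$ is differentiable (true for $p>1$, since $t\mapsto|t|^p$ is $C^1$ and $t|t|^{p-2}$ is continuous) and that a minimum at an endpoint of $[0,1]$ indeed forces the stated sign of the one-sided derivative. A secondary point requiring care is invoking the $L_p$ version of Lemma 2, whose single-intersection conclusion rests on the strict convexity of $L_p$ balls for $1<p<\infty$, together with checking that the construction does not degenerate when $A,B,W$ are collinear, which is the regime isolated by the $L_p$ analogue of Lemma 3. Everything else is the bookkeeping of translating the clean ``$90^\circ$'' conclusion of the Euclidean case into the vanishing of the two $L_p$ directional derivatives.
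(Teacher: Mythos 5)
Your proposal is correct and takes essentially the same route as the paper: the identical configuration ($A_1=W_0$, $\overrightarrow{B_0B_1}=\overrightarrow{A_0W_0}$, linear interpolation) combined with translational invariance and the $L_p$ version of Lemma~2 to obtain $f(A_1,B_x)=W_x$, followed by a vanishing first-order condition on the $p$-th power of the $L_p$ distance, with the symmetric swap of $A$ and $B$ yielding the second equation. The only difference is how the first-order condition is extracted — the paper reads off $g'(0)=h'(0)=0$ from minimality of the distance to the segment at an interior parameter value, while you combine the two one-sided endpoint conditions $h_0'(0)\ge 0$ and $h_1'(1)\le 0$, which coincide by translation — an equally valid (and arguably more carefully justified) way of arriving at the same equations.
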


\begin{proof}
Consider such a situation. In a $L_p$ space, let $f(A_0,B_0)=W_0$, $A_1=W_0$ and $\overrightarrow{B_0B_1}=\overrightarrow{A_0A_1}$, therefore we have $\overrightarrow{W_0W_1}=\overrightarrow{B_0B_1}$. We still use term $A_x, B_x, W_x$ which mean $\overrightarrow{A_0A_x} = \overrightarrow{A_0A_1}$, $\overrightarrow{B_0B_x} = \overrightarrow{B_0B_1}$ and $\overrightarrow{W_0W_x} = \overrightarrow{W_0W_1}$ respectively.

Similar to the proof of Lemma 4 and Lemma 5, we can easily find that $f(A_1, B_x)=W_x$. Considering strategyproofness, because we know that when $B$ moves from $B_0$ to $B_1$ with fixed $A_1$, $W_x$ moves from $W_0$ to $W_1$, then $d(B_x, W_x)$ is the shortest distance between $B_x$ and segment $\overline{W_0W_1}$, otherwise agent $B$ at $B_x$ may misreport her location $B'_x$ to reduce her cost.

Because of translational-invariance, let $\alpha \in [-1, 1]$. 
Using $A,B,W$ instead of $A_0,B_0,W_0$, we set
$$ g(\alpha) = \left( \sum_{i=1}^{m}{|(y_i + \alpha\cdot (y_i + x_i) - x_i)|^p} \right)^{1/p}.$$
Obviously, $g(\alpha)$ means the distance between $B$ and some point in segment $\overline{AW}$. According to the last paragraph, we have $g(0) = \min_{\alpha\in [-1,1]} g(\alpha)$, and $g'(0) = 0$ which means derivative of $g(\alpha)$.

In the same way, let $h(\beta)$ ($\beta\in [-1,1]$) denotes the distance between $A$ and some point in segment $\overline{BW}$ and we will have
$$ h(\beta) = \left( \sum_{i=1}^{m}{|y_i + \beta\cdot (y_i - x_i) + x_i|^p} \right)^{1/p} ,$$
and $h'(0)=0$.

For convenience, let $G(\alpha) = g(\alpha)^p/p$ and $H(\alpha) = h(\alpha)^p/p$, thus $G'(0)=H'(0)=0$.
We have





$$
\left\{
\begin{aligned}
G'(\alpha)=\sum_{i=1}^{m}{\left( y_i + \alpha\cdot (x_i+y_i) - x_i \right)\cdot |y_i + \alpha\cdot (x_i+y_i) - x_i|^{p-2} \cdot (x_i+y_i)} \\
H'(\beta)=\sum_{i=1}^{m}{\left( y_i + \beta\cdot (y_i-x_i) + x_i \right)\cdot |y_i+\beta\cdot(y_i-x_i)+x_i|^{p-2} \cdot (y_i-x_i)}
\end{aligned}
\right.
$$

Considering $G'(0)=H'(0)=0$, we have
$$
\left\{
\begin{aligned}
\sum_{i=1}^{m}{(x_i+y_i)\cdot (x_i-y_i) \cdot |x_i-y_i|^{p-2}} = 0 \\
\sum_{i=1}^{m}{(x_i-y_i)\cdot (x_i+y_i) \cdot |x_i+y_i|^{p-2}} = 0
\end{aligned}
\right.
$$
In summary, the equations in the theorem hold.
\qed
\end{proof}

We can find that the group of equations has an infinite number of solutions if and only if $p=2$ (When $p=2$, $|x_i\pm y_i|^{p-2}$ in the equations should be replaced with 1). And at this time the two equations are equivalent which mean the output should locate on a sphere with $AB$ as the diameter. 

\begin{theorem}
In any $L_p$ space ($1<p<\infty$) with 2 agents $A$ and $B$, a deterministic unanimous translation-invariant strategyproof mechanism f must be scalable.
\end{theorem}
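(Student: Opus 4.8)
The plan is to reduce scalability to a one-parameter statement along a ray, then to pin down the scaled output by combining the geometric locus already established with the continuity of Lemma 1 and the monotonicity forced by strategyproofness.

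\textbf{Reduction.} First I would use translation-invariance to normalize agent $A$ to the origin. Setting $D = B - A$ and using $f(kA,kB) = f(0, k(B-A)) + kA$ together with $kf(A,B) = kf(0,B-A) + kA$, the identity $f(kA,kB)=kf(A,B)$ becomes equivalent to the single claim $f(0, kD) = k\,f(0,D)$ for all $D\in\mathbb{R}_m$ and all $k>0$. Thus it suffices to study the one-parameter family $W(t) := f(0, tD)$ for $t>0$ and to prove $W(t) = t\,W(1)$, where I write $W = W(1) = f(0,D)$.

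\textbf{Direction and continuity.} The first-order stationarity argument used in the proof of Theorem 3 (setting $G'(0)=H'(0)=0$) is valid for the whole range $1<p<\infty$, since the $p$-th power of the $L_p$ norm is differentiable there; applied to the reports $0$ and $tD$ it confines each $W(t)$ to a fixed locus (for $p=2$ the sphere with diameter from $0$ to $tD$, i.e. $\overrightarrow{0W(t)}\cdot\overrightarrow{tD\,W(t)}=0$). A direct check shows the homothetic candidate $t\,W$ always lies on this locus: a homothety of the configuration about $A=0$ scales both difference vectors $W-A$ and $W-B$ by the same factor, so each term of the defining equations scales uniformly and the ``$=0$'' condition is preserved. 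I would then establish that the path $t\mapsto W(t)$ is continuous: Lemma 1, applied to agent $B$ with $A$ fixed at the origin, already makes $t\mapsto \|W(t)-tD\|$ continuous, and together with the locus constraint this upgrades to continuity of the point $W(t)$ itself.

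\textbf{Magnitude (the crux) and conclusion.} The remaining and hardest step is to fix the \emph{magnitude} along the ray. Writing strategyproofness for agent $B$ between the reports $t_1D$ and $t_2D$ gives $d(W(t_1), t_1 D) \le d(W(t_2), t_1 D)$ and $d(W(t_2), t_2 D) \le d(W(t_1), t_2 D)$, a monotone ``closest-response'' structure along the ray. I would combine these inequalities with the locus equations and the continuity of $W(\cdot)$ to show that $W(t)/t$ is constant, ruling out any sublinear or superlinear growth of $\|W(t)\|$: a deviation from the homothetic curve would let one agent in a scaled configuration reduce her cost, in the same spirit as the translation-invariance contradictions of Lemmas 3--5. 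Once $W(t)=tW$ is proved, unwinding the reduction yields $f(0,kD)=kf(0,D)$ and hence $f(kA,kB)=kf(A,B)$, so $f$ is scalable. I expect the genuine obstacle to be exactly this magnitude step: the locus of Theorems 2--3 pins only the \emph{direction} (a whole sphere of outputs is admissible for each $t$), so scalability does not follow formally from those theorems and must be extracted from the strategyproofness monotonicity plus the continuity of the output path; turning the ``no sublinear/superlinear growth'' intuition into a rigorous argument is where the real work lies.
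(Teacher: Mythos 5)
You have correctly identified that the locus equations of Theorems 2--3 cannot by themselves yield scalability, and your reduction via translation-invariance to $f(0,kD)=k\,f(0,D)$ matches the paper's normalization. But your proof has a genuine gap, and you say so yourself: the magnitude step is never actually carried out, only described as ``where the real work lies.'' As written, the plan for that step would not close. The two strategyproofness inequalities along the ray, $d(W(t_1),t_1D)\le d(W(t_2),t_1D)$ and $d(W(t_2),t_2D)\le d(W(t_1),t_2D)$, constrain only \emph{costs} at pairs of reports, while for $p=2$ an entire sphere of outputs is admissible at each $t$; a continuous family $W(t)$ satisfying both the locus equations and these cost inequalities need not be homothetic, so nothing forces $W(t)/t$ constant. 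Your intermediate claim is also unjustified: Lemma 1 gives continuity of the cost $t\mapsto\|W(t)-tD\|$, and combining this with the locus constraint does not upgrade to continuity of the point $W(t)$, because in dimension $m\ge 2$ many points of the locus realize the same cost for agent $B$, so the output could in principle jump without the cost jumping.

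The missing idea is the paper's Lemma 2 (pulling an agent along the segment toward the output leaves the output fixed), which makes the whole magnitude question disappear and replaces your limiting argument with an exact finite construction. Given $W=f(A,B)$ with $W\ne A,B$ and $0<k<1$, take the homothety centered at $W$: let $C\in\overline{AW}$ and $D\in\overline{BW}$ satisfy $\overrightarrow{WC}=k\cdot\overrightarrow{WA}$ and $\overrightarrow{WD}=k\cdot\overrightarrow{WB}$, so that $\overrightarrow{CD}=k\cdot\overrightarrow{AB}$. Applying Lemma 2 twice gives $f(C,D)=W$, and since $(C,D)$ is a translate of $(A,kB)$ (with $A$ at the origin), translation-invariance yields $f(A,kB)=kW$ exactly --- no continuity, no monotonicity along the ray, no differentiation. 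The boundary cases are handled the same way: if $W=A$ then $kB\in\overline{BW}$ gives $f(A,kB)=A$ directly, and if $W=B$ one moves $A$ to $A+(1-k)B$ by Lemma 2 and translates back; the restriction to $k<1$ is harmless since $k_1k_2=1$ lets one invert. (The paper deploys this for $p=2$ and disposes of $p>2$ by a finiteness-of-solutions argument on the Theorem 3 equations; note that the homothety argument itself needs only the tangent-balls proof of Lemma 2, i.e.\ strict convexity of the norm, so it is in fact available for all $1<p<\infty$ --- and your observation that the stationarity equations extend to $1<p<2$ is fair, but it does not repair the central missing step.)
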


\begin{proof}
When $p>2$, according to Theorem 3, because of finite number of valid output locations, the property scalability holds, otherwise it will contradict with continuity and translational-invariance (Let's imagine a situation: We move one agent to the other slowly, and if the output does not obey scalability, then it will “jump" in some time to another valid output location). Therefore we only need to discuss $p=2$.

Considering translational-invariance, we can assume $A$ locates at the origin. then we have $f(k\cdot A, k\cdot B)=f(A, k\cdot B)$, which means we only need to move $B$. Notice that for any $k_1 \cdot k_2 = 1$, we find $f(k_1\cdot A, k_1\cdot B)$ and $f(k_2\cdot A, k_2\cdot B)$ are inverted to each other. Therefore we only need to analyze $0<k<1$ (Of course we do not need to discuss when $k = 1$). Then we divide this into 3 cases.

\textbf{Case 1}, $f(A,B) = A$: According to Lemma 2, $\forall k\in (0, 1)$, we have $f(A, k\cdot B) = A$.

\textbf{Case 2}, $f(A,B) = B$: According to Lemma 2, $\forall k\in (0, 1)$, we have $f(A + (1-k)\cdot B, B) = B$. Considering translational-invariance, we have $f(A, k\cdot B) = f(A + (1-k)\cdot B -(1-k)\cdot B, B - (1-k)\cdot B) = B - (1-k)\cdot B = k\cdot B$.

\textbf{Case 3}, $f(A,B) \ne A, B$: Assume $W=f(A, B)$, $B'=k\cdot B$, and $W'=f(A,B')$. Therefore, there exists $C\in \overline{AW}$ and $D\in \overline{BW}$ that $\overrightarrow{CD} = \overrightarrow{AB'} = k\cdot \overrightarrow{AB}$. Obviously $\triangle WCD \cong \triangle W'AB'$. Thus we know $W'\in \overline{AW}$, $\overrightarrow{AW'} = k\cdot \overrightarrow{AW}$ and $\overrightarrow{W'B'} = \overrightarrow{WB}$. This means that $W'=k\cdot W$.
\qed
\end{proof}

Also, we give our conjecture about the $n$-agent situation.

\begin{conjecture}
In any $L_p$ ($1<p<\infty$) space with $n$ agents, a deterministic unanimous translation-invariant strategyproof mechanism $f$ is scalable.
\end{conjecture}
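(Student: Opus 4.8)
The plan is to lift the two-agent argument of Theorem 4 to arbitrary $n$ by isolating the one structural fact it really uses, the segment-invariance of Lemma 2, and then extracting scalability from it together with translational invariance alone (no appeal to continuity is needed). First I would record that Lemma 2 holds for every $1<p<\infty$ and every $n$, not merely for Euclidean space or $p>2$. Its proof uses only strategyproofness plus the remark that, when $A_i'$ lies on $\overline{A_iW}$, the balls $O_1$ centered at $A_i$ of radius $d(A_i,W)$ and $O_2$ centered at $A_i'$ of radius $d(A_i',W)$ are internally tangent at $W$, since $d(A_i,W)=d(A_i,A_i')+d(A_i',W)$ for collinear points in any normed space. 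Because the $L_p$ norm is strictly convex for $1<p<\infty$, two internally tangent balls share a single boundary point, which forces $f(A_i',\bm{A}_{-i})=W$. This is the only place where the geometry of the ambient space enters.

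Granting this, the heart of the proof is a one-agent-at-a-time shrinking argument. Fix a profile $\bm{A}$, write $W=f(\bm{A})$, and fix $k\in(0,1)$. For each agent $i$ the point $kA_i+(1-k)W=A_i+(1-k)(W-A_i)$ lies on the segment $\overline{A_iW}$, so I would apply the strengthened Lemma 2 successively to agents $1,2,\dots,n$. After moving agent $1$ the output is still $W$, so for agent $2$ the relevant segment $\overline{A_2W}$ is unchanged, and likewise down the line; each move keeps the output pinned at $W$. Hence
$$ f\bigl(k\bm{A}+(1-k)W\bigr)=W, $$
where $k\bm{A}+(1-k)W$ denotes the profile whose $i$-th entry is $kA_i+(1-k)W$.

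Translational invariance then converts this fixed-point statement into homogeneity. Applying it with translation vector $t=(1-k)W$ gives $f\bigl(k\bm{A}+(1-k)W\bigr)=f(k\bm{A})+(1-k)W$, and comparing with the displayed identity yields $f(k\bm{A})=W-(1-k)W=k\,f(\bm{A})$ for all $k\in(0,1)$. The range $k>1$ follows by the inversion trick already used in Theorem 4: with $k'=1/k\in(0,1)$ and the established case applied to the profile $k\bm{A}$, one gets $f(\bm{A})=k'f(k\bm{A})$, i.e. $f(k\bm{A})=k\,f(\bm{A})$. Since $k=1$ is trivial, scalability holds for every $k>0$ and every $n$.

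The hard part will not be the $n$-agent bookkeeping, which is routine once the shrink-toward-$W$ step is in place, but confirming the uniform validity of Lemma 2, i.e. that the single-point intersection underlying it survives for all $1<p<\infty$. Strict convexity of the $L_p$ ball delivers this, but it is precisely the property that fails at the excluded endpoints $p=1,\infty$: there a single agent can have several indifferent deviations, the internal-tangency uniqueness breaks, and the shrinking step no longer forces the output to stay at $W$. So I would treat the clean statement of Lemma 2 for every $1<p<\infty$ as the load-bearing lemma, and regard its verification as the main obstacle to turning this plan into a complete proof of the conjecture.
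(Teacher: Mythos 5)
Your proposal is correct, and it is important to note that the paper offers no proof of this statement at all: it is stated only as a conjecture, with just the two-agent case established (Theorem 4). Your argument actually settles it. The load-bearing step checks out: Lemma 2 uses only strategyproofness plus strict convexity of the norm, since the chain $\|A_i-W\|\le\|A_i-W'\|\le\|A_i-A_i'\|+\|A_i'-W'\|\le\|A_i-A_i'\|+\|A_i'-W\|=\|A_i-W\|$ forces equality throughout, and strict convexity (which the paper itself already asserts in its Preliminaries, via ``equality in the triangle inequality iff same direction'') then pins $W'=W$; this holds for every $1<p<\infty$, every $m$ and every $n$, so your worry that verifying this is the main obstacle is unfounded --- it is immediate from the paper's own standing assumption. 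The shrink-toward-$W$ induction is sound (after moving agents $1,\dots,j$ the output is still $W$, so Lemma 2 applies to the updated profile with the same target point $W$), and the translation step $f(k\bm{A})=f\bigl(k\bm{A}+(1-k)W\bigr)-(1-k)W=kW$ plus the inversion trick for $k>1$ completes scalability.

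Comparing with what the paper does prove: for $n=2$, $p=2$, Case 3 of Theorem 4 is essentially your argument in disguise --- the points $C\in\overline{AW}$, $D\in\overline{BW}$ with $\overrightarrow{CD}=k\cdot\overrightarrow{AB}$ are exactly $kA+(1-k)W$ and $kB+(1-k)W$, and the congruence-plus-translation reasoning is your displayed identity. But for $p>2$ the paper instead leans on Theorem 3's ``finitely many valid outputs'' together with an informal continuity/jump argument, and it silently omits $1<p<2$ entirely (Theorem 3 is stated only for $2<p<\infty$). Your route buys three things over the paper's: it needs no continuity and no case split on whether $W$ coincides with an agent, it covers the full range $1<p<\infty$ uniformly, and it extends to all $n$, thereby upgrading the conjecture to a theorem and simultaneously giving a cleaner proof of Theorem 4 itself.
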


\section{Two Special Cases}

Although we cannot give complete characterization of any 2-agent deterministic unanimous translation-invariant strategyproof mechanism, we finish two special cases. One is dictatorial mechanism in Euclidean spcae, and the other is anonymous mechanism in 2-dimensional Euclidean space.

\subsection{Dictatorial Mechanisms}

\begin{theorem}
In any Euclidean space with 2 agents, $f$ is a deterministic unanimous translation-invariant strategyproof mechanism, then $f$ is rotation-invariant if and only if $f$ is dictatorial.
\end{theorem}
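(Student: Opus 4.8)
The plan is to prove both directions of the equivalence. The direction that dictatorial implies rotation-invariant is the easy one: if $f(A,B)=A$ (or always $B$) for every profile, then rotating both agents around any center carries $A$ to its rotated image, so the output—being exactly $A$—rotates with it. This follows immediately from the definitions and I would dispose of it in a sentence or two.

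The substantive direction is that rotation-invariance forces dictatorship. First I would invoke Theorem 2, which already tells us that $\overrightarrow{AW}\cdot\overrightarrow{BW}=0$, so $W$ lies on the sphere (in the plane spanned by the configuration, a circle) with diameter $AB$. The key idea is to exploit rotation-invariance around the midpoint $M$ of $AB$. A rotation by $180^\circ$ about $M$ swaps $A$ and $B$, so by rotation-invariance the output $W$ must map to itself under this half-turn; but the half-turn about $M$ sends $W$ to the antipodal point $W'$ on the circle, and the only way $W=W'$ could hold is if $W$ coincides with $M$—yet $M$ is not on the circle unless $A=B$. Hence I expect the real leverage to come not from the $180^\circ$ rotation that swaps the agents (which would need anonymity, not just rotation-invariance), but from rotations about $A$ itself. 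Fixing $A$ as the center and rotating $B$ to $B$'s new position, rotation-invariance says $f(A,B)$ rotates about $A$ by the same angle. I would track where $W$ goes: since $W$ must stay on the circle with diameter $AB$ for the rotated configuration as well (again by Theorem 2), and simultaneously $W$ must be the rotated image of the old output about $A$, these two constraints pin down the position of $W$ on the circle.

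Concretely, I would set up coordinates with $A$ at the origin and $B$ at distance $r$, and write $W$ on the circle of diameter $AB$ as making some angle $\theta$ (measured so that $\angle AWB=90^\circ$). Rotation-invariance about $A$ through angle $\varphi$ rotates $W$ rigidly about the origin by $\varphi$; the new configuration has $B$ rotated by $\varphi$, so its own diameter-circle is the old one rotated by $\varphi$ about $A$. Requiring the rotated $W$ to lie on the rotated circle gives a functional equation for $\theta$ that must be invariant under all $\varphi$. I expect this to force $\theta$ to be one of the two degenerate endpoints of the diameter, namely $W=A$ or $W=B$, because any intermediate point on the circle has an angular position relative to the $AB$-direction that would have to be simultaneously fixed and rotated. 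Combining this with continuity (Lemma 1) and the earlier lemmas, which forbid the output from jumping between $A$ and $B$ as the configuration varies, would show the choice of endpoint is the same agent across all profiles, i.e. $f$ is dictatorial.

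The main obstacle I anticipate is ruling out the \emph{anonymous} alternative cleanly: there exist non-dictatorial mechanisms (the three mechanisms of Section 4) whose output sits at a genuine interior point of the semicircle, and one must show these are incompatible with full rotation-invariance while dictatorship survives. The delicate point is that rotation-invariance is a strong symmetry that, unlike mere translation- and scale-invariance, constrains the angular placement of $W$ relative to the $\overrightarrow{AB}$ direction; I would need to argue carefully that the only angularly-rigid choices are the two agents' own locations, and then use a connectedness/continuity argument to prevent the dictator from switching identity. Pinning down this last continuity step—so that ``sometimes $A$, sometimes $B$'' is excluded—is where I expect the bookkeeping to be most demanding.
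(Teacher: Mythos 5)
Your easy direction is fine, but the substantive direction has a genuine gap at its central step. You propose to get a contradiction by rotating about $A$ through an angle $\varphi$ and ``requiring the rotated $W$ to lie on the rotated circle,'' hoping this functional equation forces $W\in\{A,B\}$. That requirement is vacuous: a rotation about $A$ is a rigid motion carrying the circle with diameter $AB$ onto the circle with diameter $AB'$, so the rotated $W$ \emph{automatically} lies on the rotated circle, for every angular position $\theta$ of $W$. Indeed, in the plane the mechanism ``place $W$ on the circle with diameter $AB$ at a fixed oriented angle $\theta$ from $\overrightarrow{AB}$'' is unanimous, translation-invariant, scalable, rotation-invariant, and satisfies the necessary condition of Theorem 2 for every profile, yet is not dictatorial for $\theta$ strictly between the endpoints. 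So rotation-invariance plus Theorem 2 alone cannot yield your conclusion; you must re-invoke strategyproofness beyond its Theorem 2 consequence, and your plan never does. (You sense this in your last paragraph --- ruling out interior points of the semicircle is exactly ``the main obstacle'' --- but you offer no step that actually does it.)

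The paper closes this gap differently. Supposing $W_0=f(A_0,B_0)\notin\{A_0,B_0\}$, it first uses strategyproofness (Lemma 2, moving agent $A$ along $\overline{A_0W_0}$) to get $f(W_0,B_0)=W_0$, i.e.\ a profile whose output sits \emph{at the first agent}. It then picks $B_1\in\overline{A_0B_0}$ with $\|A_0B_1\|=\|W_0B_0\|$; by scalability $f(A_0,B_1)$ is the scaled copy of $W_0$, which is \emph{not} at the first agent. But $(W_0,B_0)$ and $(A_0,B_1)$ are congruent ordered pairs, so translation- plus rotation-invariance force their outputs to correspond under the rigid motion --- contradiction. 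This comparison of two congruent profiles, one anchored by the strategyproofness identity $f(W_0,B_0)=W_0$, is the idea your proposal is missing. Your final continuity/connectedness argument to fix the dictator's identity is also unnecessary: once $f(A_0,B_0)=A_0$ for a single profile with $A_0\neq B_0$, translation, rotation, and scaling carry that ordered pair to any other, so the same agent wins everywhere without any limiting argument.
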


\begin{proof}
First of all, if $f$ is dictatorial, then obviously it is rotation-invariant. Then we only need to analyze the case that $f$ is rotation-invariant.

Let 2 agents be $A$ and $B$, and we assume $W=f(A,B)$. If there exists $A_0, B_0$ that $W_0 = f(A_0,B_0) \ne A_0, B_0$, then we have $f(W_0, B_0) = W_0$. Otherwise we can imagine an agent $A$ with location $W_0$ misreports her location as $A_0$ to reduce her cost. Besides, we know that $f$ is scalable and translation-invariant, then if we move $B_0$ to some $B_1\in \overline{A_0B_0}$ so that $\|A_0B_1\| = \|W_0B_0\|$. Therefore we will find that it is contradict with rotational-invariance 
by observing $f(A_0, B_1)$ and $f(W_0, B_0)$. Thus the output can only locate at $A$ or $B$.

Consider the following three properties: rotational-invariance, translational-invariance and scalability, and we will find that if $f(A_0, B_0) = A_0$ (or $B_0$, $A_0\ne B_0$ otherwise we can solve this by unanimity), then for $\forall A, B\in \mathbb{R}_m$, we have $f(A,B)=A$ (or $B$), because any two points in $L_p$ space can be transformed by $A_0$ and $B_0$ with these three properties. Therefore, $f$ is dictatorial.
\qed
\end{proof}

\subsection{Anonymous Mechanisms}

Here we give 3 anonymous mechanisms in 2-dimensional Euclidean space. Let 2 agents be $A$ and $B$ with different coordinates $(x_A, y_A)$ and $(x_B, y_B)$ respectively. For these 3 mechanisms, if $A=B$, then we select $A$ (or $B$) as the facility.

\begin{mechanism}
{\rm : ($u,v$)-C1 Mechanism ($u,v\in \{0,1\}$)}. \\
If $u=1$, then $x_W = \max \{ x_A, x_B \}$; if $u=0$, then $x_W = \min \{ x_A, x_B \}$.\\
If $v=1$, then $y_W = \max \{ y_A, y_B \}$; if $v=0$, then $y_W = \min \{ y_A, y_B \}$.
\end{mechanism}

\begin{mechanism}
{\rm : ($u$)-C2 Mechanism ($u\ne 0$)}.
Without loss of generality, we assume $x_A\le x_B$. And $W=f(A,B)$ with coordinate $(x_W,y_W)$. We divide this into three cases.\\
{\rm\textbf{Case 1.}} When $x_A = x_B$, we let $x_W = x_A$. If $u>0$, let $y_W = \max\{y_A, y_B\}$ and if $u < 0$, let $y_W = \min\{y_A, y_B\}$.\\
Notice other 2 cases satisfy $x_A < x_B$. Let $R = (y_B - y_A) / (x_B - x_A)$.\\
{\rm\textbf{Case 2.}} $u > 0$. When $-1/u \le R \le u$, draw line (1) $y=u\cdot (x-x_A) + y_A$ and line (2) $y=-\frac{1}{u}\cdot (x-x_B) + y_B$. Let $W$ be the intersection of the two lines.
When $R > u$, we let $W=A$ and when $R < -1 / u$, we let $W=B$.\\
{\rm\textbf{Case 3.}} $u < 0$. When $u \le R \le -1 / u$, draw line (1) $y=u\cdot (x-x_A) + y_A$ and line (2) $y=-\frac{1}{u}\cdot (x-x_B) + y_B$. Let $W$ be the intersection of the two lines.
When $R > -1 / u$, we let $W=A$ and when $R < u$, we let $W=B$.
\end{mechanism}

\begin{mechanism}
{\rm : ($v$)-C3 Mechanism ($v\ne 0$)}.
Without loss of generality, we assume $y_A\le y_B$. And $W=f(A,B)$ with coordinate $(x_W,y_W)$. We divide this into three cases.\\
{\rm\textbf{Case 1.}} When $y_A = y_B$, we let $y_W = y_A$. If $v>0$, let $x_W = \max\{x_A, x_B\}$ and if $v < 0$, let $x_W = \min\{x_A, x_B\}$.\\
Notice other 2 cases satisfy $x_A < x_B$. Let $S = (x_B - x_A) / (y_B - y_A)$.\\
{\rm\textbf{Case 2.}} $v > 0$. When $-1/v \le S \le v$, draw line (1) $x=v\cdot (y-y_A) + x_A$ and line (2) $x=-\frac{1}{v}\cdot (y-y_B) + x_B$. Let $W$ be the intersection of the two lines.
When $S > v$, we let $W=A$ and when $S < -1 / v$, we let $W=B$.\\
{\rm\textbf{Case 3.}} $v < 0$. When $v \le S \le -1 / v$, draw line (1) $x=v\cdot (y-y_A) + x_A$ and line (2) $x=-\frac{1}{v}\cdot (y-y_B) + x_B$. Let $W$ be the intersection of the two lines.
When $S > -1 / v$, we let $W=A$ and when $S < v$, we let $W=B$.
\end{mechanism}

\begin{theorem}
In any 2-dimensional Eucildean space with 2 agents, a mechanism $f$ is deterministic unanimous translation-invariant anonymous strategyproof, if and only if $f$ is one of Mechanism 1, 2, and 3.
\end{theorem}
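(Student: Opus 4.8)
The plan is to reduce the two-agent problem to a single-agent nearest-point (projection) problem and then read off the admissible mechanisms geometrically. First I would fix agent $A$ and study the one-variable map $B\mapsto f(A,B)$. Strategyproofness of the \emph{moving} agent says precisely that $f(A,B)$ is the point of the range $\mathcal{R}_A:=\{f(A,B'):B'\in\mathbb{R}_2\}$ nearest to $B$, since a report $B'$ yields cost $\|f(A,B')-B\|$ and truthful reporting must minimize this over the whole range. Since $f$ is a function, this nearest-point map is single-valued everywhere, so by the Bunt--Motzkin theorem its (closed) range is convex; a technical point is the closedness of $\mathcal{R}_A$, which I would handle by passing to the closure and using Lemma 1 to recover the range. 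Translational invariance gives $\mathcal{R}_A=\mathcal{R}_0+A$ and scalability (Theorem 4) makes $\mathcal{R}_0$ a cone, so $\mathcal{R}_0$ is a closed convex cone with apex at the origin and
$$ f(A,B)=A+P\big(B-A\big), $$
where $P$ is the nearest-point projection onto $\mathcal{R}_0$. Unanimity is exactly the statement that the apex lies in the cone, which holds automatically.

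Next I would pin down the cone using anonymity. Among closed convex cones in the plane, the half-plane, the line and the ray are all incompatible with anonymity: for a half-plane or line the event ``$f(A,B)=B$'' has positive angular measure while its image under the swap $A\leftrightarrow B$ has measure zero, a ray never outputs $B$, and the whole plane forces the dictatorial map $f(A,B)=B$, which is not anonymous. Hence $\mathcal{R}_0$ is a proper sector spanning directions $[\phi_1,\phi_2]$ with opening $\delta=\phi_2-\phi_1\in(0,\pi)$, and Theorem 2 is automatically consistent with projecting onto it. The decisive computation is $f(A,B)=f(B,A)$, read as the equality of the two projections onto the cones translated to $A$ and to $B$. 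Writing $u=\arg(B-A)$, the event ``$f(A,B)=B$'' is $u\in[\phi_1,\phi_2]$, an arc of width $\delta$, whereas ``$f(B,A)=B$'' means $A$ lands in the apex (``behind'') region of the cone at $B$, which translates to $u\in[\phi_2-90^\circ,\phi_1+90^\circ]$, an arc of width $180^\circ-\delta$. Equating these two arcs forces $\phi_2=\phi_1+90^\circ$, i.e.\ $\delta=90^\circ$: the cone must be a right-angle sector with perpendicular boundary rays. I expect this ``anonymity forces a right angle'' step to be the main obstacle, because it requires accounting for all four projection regions (inside, the two perpendicular-foot sides, and behind) and checking that their images coincide under the swap, not merely the ``output $=B$'' region.

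Finally I would enumerate the right-angle sectors and match them to the three families, then verify the converse. For a given pair of perpendicular boundary slopes $\{s,-1/s\}$, the projection of $B$ onto the boundary ray of slope $s$ is exactly the intersection of the line through $A$ of slope $s$ with the perpendicular line through $B$ of slope $-1/s$, which is precisely the interior rule of Mechanism 2; the inside and behind regions reproduce its $W=B$ and $W=A$ boundary cases, and interchanging the roles of $x$ and $y$ to allow a vertical boundary ray gives Mechanism 3. The axis-aligned sectors (boundary slopes $0,\infty$) give coordinatewise $\max/\min$, which is Mechanism 1, and as $\phi_1$ ranges over all orientations the union of Mechanisms 1, 2, 3 exhausts every right-angle sector, the C2/C3 split being exactly the finite-slope versus vertical-ray dichotomy with C1 as the common degenerate limit. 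For the converse I would check directly that each listed mechanism is the projection onto such a translated right-angle sector: translational invariance and scalability are immediate, unanimity holds since the apex lies in the cone, anonymity is the right-angle identity just established, and strategyproofness for both agents follows because projection onto a convex set returns the nearest range point, so neither agent can misreport to get strictly closer.
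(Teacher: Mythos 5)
Your architecture (show the slice $B\mapsto f(A,B)$ is the metric projection onto a translated closed convex cone, then show anonymity forces the cone to be a self-dual, i.e.\ right-angle, sector matching Mechanisms 1--3) is genuinely different from the paper's, but it has a real gap at its foundation: the appeal to Bunt--Motzkin. That theorem requires the range to be a \emph{Chebyshev} set, i.e.\ every point of the plane has a \emph{unique} nearest point in it. Strategyproofness of the moving agent only tells you that $f(A,B)$ is \emph{some} nearest point of $\overline{\mathcal{R}_A}$ to $B$; the fact that $f$ is a function means it is a \emph{selection} from the nearest-point correspondence, not that the correspondence is single-valued, and a selection from a non-convex set is still a perfectly good function. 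Moreover, convexity genuinely cannot be recovered from the ingredients your structural step uses (moving-agent strategyproofness, translation invariance, scalability, unanimity): take $\mathcal{R}_0$ to be the union of two rays from the origin in directions $0^\circ$ and $120^\circ$, fix any tie-breaking selection $P$ of nearest points, and set $f(A,B)=A+P(B-A)$. This satisfies all of those properties with a non-convex cone as range. What kills it is strategyproofness of the \emph{other} agent: with $B-A$ a unit vector at angle $100^\circ$, truthful play projects onto the $120^\circ$ ray and costs $A$ about $\cos 20^\circ\approx 0.94$, while reporting $A'$ with $B-A'$ at angle $50^\circ$ and length $\cos 10^\circ/\sin 50^\circ$ places the output on the ray $B+t\,e_{270^\circ}$ at distance $\sin 10^\circ\approx 0.17$ from $A$. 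So convexity of the range must be extracted from the second agent's incentives (or from anonymity), an ingredient your Bunt--Motzkin step never invokes; this is exactly what the paper's Theorem~2 and Lemmas~3--5 supply (both agents' deviations are used there) before any global structure is asserted, after which its necessity argument runs through the two canonical profiles $f((0,-1),(0,1))$ and $f((-1,0),(1,0))$ and propagates them with Lemma~2 rather than through any range-convexity claim.

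The second gap is the one you flagged yourself: equating only the two ``output $=B$'' events under the swap $A\leftrightarrow B$ shows the sector \emph{must} have opening $90^\circ$ with matching boundary rays, but it does not show such a sector actually \emph{is} anonymous, and your sufficiency direction leans on exactly that. This part is cleanly fixable once the cone structure is legitimately established: Moreau's decomposition $v=P_K(v)+P_{K^\circ}(v)$, together with $P_{-S}(-v)=-P_S(v)$, gives $f(A,B)=A+P_K(B-A)=B+P_{-K^\circ}(A-B)$, so $f(A,B)=f(B,A)$ for \emph{all} profiles if and only if $K^\circ=-K$, i.e.\ $K$ is a right-angle sector; this identity simultaneously verifies anonymity and (via anonymity) strategyproofness for both agents, which would be tidier than the paper's angle-by-angle case check in its sufficiency proof. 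If you repair the convexity step -- e.g.\ by first proving the paper's Theorem~2 and Lemma~2 and deriving the sector structure from them, or by proving uniqueness of the nearest point directly from the second agent's strategyproofness -- the rest of your enumeration (axis-aligned sectors give C1, finite-slope boundary rays give C2, vertical boundary rays give C3) does match the three families, though you should calibrate the $W=A$ versus $W=B$ boundary cases against the paper's strategyproofness proof of Mechanism~2 rather than the mechanism statement, whose labels appear inconsistent with that proof.
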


\begin{proof}
Firstly we prove sufficiency. Obviously if $f$ is one of the 3 mechanisms, it is deterministic, unanimous, translation-invariant and anonymous. So we only need to prove it is strategyproof. Considering translational-invariance, anonymity and scalability (proved in previous theorem), we only need to prove for any $A_0 (0,0)$ and $B_0 (\cos\theta, \sin\theta)$, $B$ cannot gain from misreporting her location.

Mechanism 1 is strategyproof: Out of symmetry, we only need to prove (1,1)-C1 Mechanism is strategyproof.\\
(1) When $\theta\in [0, \pi/2]$, $B$ will never misreport because her cost is 0.\\
(2) When $\theta\in (\pi/2, \pi)$, the facility is $(0,\sin\theta)$ and $B$'s cost is $-\cos\theta$. Because $A_0(0,0)$, then $x_W\ge 0$, therefore this is the minimum cost for $B$.\\
(3) When $\theta\in [\pi, 3\pi/2]$, the facility is $(0,0)$ and $B$'s cost is 1. Because $A_0(0,0)$, then $x_W\ge 0$ and $y_W\ge 0$, therefore this is the minimum cost for $B$.\\
(4) When $\theta\in (3\pi/2, 2\pi)$, we can use the same way as (2) to prove.

Mechanism 2 and 3 are strategyproof: Out of symmetry, we only need to prove Mechanism 2 is strategyproof for $u>0$ and $\theta\in [-\pi/2,\pi/2]$ (And we fill find this time $x_B\ge x_A$).\\
(1) When $\theta\in [\arctan u-\pi/2, \arctan u]$, then $W_0\in$ Line 1 ($y=ux$) and $x_W\ge 0$. We assume $B$ misreports as $B'(x', y')$. When $y'\ge 0$, if $y' / x' > u$ or $\le 0$ or $x'=0$, then the output will locate at top left of Line 1, leading the cost larger than previous one; if $y' / x' \in [0, u]$, then the output will locate at Line 1, leading the cost never smaller than the previous one because $A_0W_0 \perp B_0W_0$ and in Euclidean space this is the smallest distance. When $y'< 0$, in the same way, if $-1/u < y'/x' < 0$, the output will locate at Line 1; if $y'/x' < -1/u$ or $>u$ or $x'=0$, the output will locate at $A$; and if $0<y'/x'<u$, the output will locate at top left of Line 1.\\
(2) When $\theta\in (\arctan u, \pi/2]$, $B$ will never misreport because $W_0=B_0$.\\
(3) When $\theta\in [-\pi/2, \arctan u-\pi/2)$, then $W_0=A_0$ and the cost is 1. We assume $B$ misreports as $B'(x',y')$. When $y'\le 0$, if $y'/x'\le -1/u$ or $\ge u$ or $x'=0$, the output will still locate at $A_0$; if $-1/u<y'/x'\le 0$, the output $W'\in$ Line 1 and $\angle B_0A_0W'>90^\circ$ so the cost will increase; if $0<y'/x'<u$, the output $W'\in$ Line 2 and $\angle B_0A_0W'>90^\circ$ so the cost will increase. When $y'>0$, in the same way, we have $\angle B_0A_0W'>90^\circ$.

Secondly, we prove necessity. In fact, we only need to observe the output of $f((0,-1), (0,1))$ and $f((-1,0), (1,0))$ and this is enough for us to characterize the whole mechanism. According to Theorem 2, any output $W$ must satisfy $\overrightarrow{WA}\cdot \overrightarrow{WB}=0$. We divide this into 4 cases.

(1) $f((0,-1), (0,1))=(0,\pm 1)$ and $f((-1,0), (1,0))=(\pm 1,0)$. Out of symmetry, we mainly discuss the positive result $(1, 0)$ and $(0, 1)$. We claim this mechanism is the same as $(1,1)$-C1 Mechanism. The proof is rather easy. Assume there exists a group of $W_0=f(A_0,B_0)$ that $x_W\ne \max\{x_A, x_B\}$. Obviously $y_A\ne y_B$, $x_A\ne x_B$, and $W_0,A_0,B_0$ cannot locate at the same line, otherwise it will contradict the condition at once. Because $\angle A_0W_0B_0=90^\circ$, then if we move two lines $y=0$ and $x=0$, we will finally find that one of the two lines will have 2 intersections with broken line $A_0W_0B_0$ and let the two intersections on $\overline{A_0W_0}$ and $\overline{W_0B_0}$ be $C$ and $D$ (if there are infinite intersections, we can let one of the two points be $W_0$ and the other be $A_0$ or $B_0$, then there must be one contradiction in these two situations. In other 3 cases, this discussion will be omitted due to the length), thus we have $f(C,D)=W_0$, which contradicts with the conditions. In the same way we can also prove $y_W\ne \max\{y_A,y_B\}$ will lead to a contradiction, too. Therefore, this is the same as C1 Mechanism. In summary, the four situations are the 4 combinations of C1 Mechanism with different parameters.

(2) $f((0,-1), (0,1))=(0,\pm 1)$ and $f((-1,0), (1,0))=(\cos\theta_1, \sin\theta_1)$, where $\theta_1\ne 0,\pi$ (same in the following (4)). Out of symmetry, we only consider $f((0,-1), (0,1))=(0,\pm 1)$ and $\theta_1\in (0,\pi)$ (If $\theta_1\in (\pi,2\pi)$ then we can use the same method below to lead to a contradiction). According to Lemma 2, for $\forall C\in \overline{A_0W_0}, \forall D\in \overline{B_0W_0}$, we have $f(C,D)=W_0$, which is the same as case 2 in C2 Mechanism when $(y_A-y_B)/(x_A-x_B)\in [\sin\theta_1/(\cos\theta_1-1),\sin\theta_1 / (1+\cos\theta_1)]$. When $(y_A-y_B)/(x_A-x_B)>\sin\theta_1 / (1+\cos\theta_1)$ or $<\sin\theta_1/(\cos\theta_1-1)$, we can use the same method as (1) to prove $W$ 
locates at agents with larger $y$ axis, which is the same as C2 Mechanism. In summary, this case means $(\frac{\sin\theta_1}{1+\cos\theta_1})$-C2 Mechanism.

(3) $f((0,-1), (0,1))=(\cos\theta_2, \sin\theta_2)$ and $f((-1,0), (1,0))=(\pm 1,0)$, where $\theta_2\ne 0.5\pi, 1.5\pi$ (same in the following (4)). This case is the same as $(\frac{\cos\theta_2}{1+\sin\theta_2})$-C3 Mechanism. The proof is similar to (2), thus omitted.

(4) $f((0,-1), (0,1))=(\cos\theta_2, \sin\theta_2)$ and $f((-1,0), (1,0))=(\cos\theta_1, \sin\theta_1)$. In fact, this case cannot exist. Let $A_0(-1,0), B_0(1,0), A_1(0,-1), B_1(0,1)$. We can draw a line which can be moved in the space and let it has 2 intersections with broken lines $A_0W_0B_0$ and $A_1W_1B_1$ and the intersections are $C_0,D_0$ and $C_1,D_1$ respectively. Considering $f(C_0,D_0)=W_0$ and $f(C_1,D_1)=W_1$, we find that considering translational-invariance and scalability, the contradiction is obvious.
\qed
\end{proof}

\section{Discussion}
In this section we will discuss the general median mechanism and the lower bound of the maximum cost view.

\begin{mechanism}
{\rm : (General Median Mechanism)}.
Given location of $n$ agents, let $W$ be the output, then in every dimension, $W$'s coordinate is equal to agents' median coordinate in this dimension. If there are 2 median coordinates, then we select the larger one.
\end{mechanism}

In fact, when there are 2 median coordinates, it does not matter if we select the larger one or the smaller one.

\begin{lemma}
The General Median Mechanism in Euclidean space fits Conjecture 1 if and only if dimension $m\le 2$.
\end{lemma}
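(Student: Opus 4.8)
The plan is to prove the two directions separately, both resting on one structural fact about Mechanism~4: in each dimension $k$ the coordinate $w_k$ of the output $W$ is the median of the agents' $k$-th coordinates, and a median of a finite list (with the stated tie-break selecting the larger of two middle values) is always one of the list entries. Hence for every dimension $k$ there is an agent attaining that median, i.e. an agent whose displacement vector to $W$ has a zero in its $k$-th component. This single observation drives everything.

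For the forward direction ($m\le 2\Rightarrow$ fits Conjecture~1) I would argue as follows. When $m=1$, $W$ equals some agent $A_i$, so $\overrightarrow{A_iW}=\mathbf{0}$ and its inner product with any $\overrightarrow{A_jW}$ vanishes (this is also immediate from Theorem~1). When $m=2$, pick an agent $A_i$ attaining the median in the first coordinate and an agent $A_j$ attaining the median in the second. Then $\overrightarrow{A_iW}$ is vertical (first entry $0$) and $\overrightarrow{A_jW}$ is horizontal (second entry $0$), so $\overrightarrow{A_iW}\cdot\overrightarrow{A_jW}=0$; if it happens that $i=j$ then $W=A_i$ and the claim is trivial. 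Thus the median mechanism always realizes an orthogonal pair when $m\le 2$.

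For the reverse direction I would exhibit an explicit counterexample in dimension $3$ and then lift it to every $m\ge 3$. Take $n=3$ agents $A_1=(1,0,0)$, $A_2=(0,1,2)$, $A_3=(2,2,1)$, whose coordinatewise medians give $W=(1,1,1)$. A direct computation yields $\overrightarrow{A_1W}=(0,1,1)$, $\overrightarrow{A_2W}=(1,0,-1)$, $\overrightarrow{A_3W}=(-1,-1,0)$, and all three pairwise inner products equal $-1\neq 0$; moreover $W$ coincides with no agent, so even the degenerate choice $i=j$ fails. Hence Conjecture~1 is violated for $m=3$. To cover every $m>3$, append $m-3$ zero coordinates to all three agents: the mechanism then sets each new coordinate of $W$ to $0$, every $\overrightarrow{A_iW}$ acquires only zero entries, all pairwise inner products stay $-1$, and $W$ still differs from each agent in the first three coordinates. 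Therefore the median mechanism fails Conjecture~1 for every $m\ge 3$.

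The heart of the argument is the construction: by arranging for three distinct agents to attain the coordinate-median in three distinct dimensions, each $\overrightarrow{A_iW}$ is forced to have exactly one zero entry in a different position, which collapses every pairwise inner product to a single product of two ``median-straddling'' differences. The only thing that needs genuine care is checking that each selected coordinate is a \emph{strict} median (so the surviving product is provably nonzero) and that $W$ is not any agent; once that is confirmed, the orthogonality bookkeeping and the zero-padding extension are entirely routine. I expect the verification of strictness of the medians to be the most error-prone step, which is why I would fix explicit integer coordinates as above rather than argue abstractly.
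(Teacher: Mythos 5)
Your proposal is correct and follows essentially the same route as the paper: for $m\le 2$ you use the fact that each coordinate-median is attained by some agent to produce the orthogonal pair, and for $m\ge 3$ you give a concrete $3$-agent counterexample whose displacement vectors each have one zero in a distinct coordinate and $\pm1$ elsewhere, which is structurally the same construction as the paper's $A_1(0,1,-1,0,\dots,0)$, $A_2(-1,0,1,0,\dots,0)$, $A_3(1,-1,0,0,\dots,0)$ with output at the origin. Your write-up is in fact slightly more careful than the paper's (explicit dot-product computations, the $i=j$ degenerate case, and the zero-padding to general $m>3$), but the underlying argument is identical.
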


\begin{proof}
Obviously this mechanism is unanimous, translation-invariant, scalable, and much literature have proved that it's strategyproof in $L_p$ space. When $m=1$, Conjecture 1 is also obvious.

When $m=2$, Let's recall Theorem 2. If $\exists i \in N, W=A_i$, then for any $j\in N$, we have $\overrightarrow{WA_i}\cdot \overrightarrow{WA_j}=0$. If for $\forall i \in N, W\ne A_i$, then assuming $W(0,...,0)$, there must exists $A_s, A_t$ that they locate on different axes, meaning $\overrightarrow{WA_s}\cdot \overrightarrow{WA_t}=0$.

When $m\ge 3$, we can give an counter-example with 3 agents. Let them be $A_1(0,1,-1,0,...,0)$, $A_2(-1,0,1,0,...,0)$ and $A_3(1,-1,0,0,...,0)$. Then the output is $(0,0,...,0)$. But this is contradict with Conjecture 1. 
\qed
\end{proof}

In the end, we use our tool to prove a lower bound of 2 in the maximum cost view.

\begin{lemma}
In any $L_p$ space and maximum cost view, the lower bound of deterministic strategyproof mechanism is 2.
\end{lemma}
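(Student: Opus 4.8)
The plan is to exhibit, for an arbitrary deterministic unanimous translation-invariant strategyproof mechanism $f$, a single two-agent instance on which its maximum cost equals exactly twice the optimum; since the adversary gets to choose the instance, this forces the worst-case ratio of every such mechanism to be at least $2$. The only real ingredient I expect to need is Lemma 2 (the movement-invariance property), so I will \emph{not} invoke the full angle characterization of Theorem 2.

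First I would pin down the optimum, which is independent of the mechanism. For two distinct agents $A,B$ and any facility $X$, the triangle inequality gives $d(A,X)+d(B,X)\ge d(A,B)$, hence $\max\{d(A,X),d(B,X)\}\ge \tfrac12 d(A,B)$; and the midpoint $M=\tfrac12(A+B)$ attains $d(A,M)=d(B,M)=\tfrac12 d(A,B)$. So the optimal maximum cost on $(A,B)$ is exactly $\tfrac12 d(A,B)$.

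The heart of the argument is a reduction to an instance on which the facility is forced onto an agent. Let $W=f(A,B)$. If $W\in\{A,B\}$, then on the instance $(A,B)$ itself the facility already coincides with one agent while the agents remain distinct, so the mechanism's maximum cost is $d(A,B)=2\cdot\mathrm{OPT}$. If instead $W\notin\{A,B\}$, I would apply Lemma 2 to the first agent: since $W$ is the far endpoint of the segment $\overline{AW}$, moving the first agent from $A$ all the way to $W$ leaves the output fixed, i.e. $f(W,B)=W$. The new instance $(W,B)$ has distinct agents (because $W\neq B$) with the facility pinned at the first agent's reported location, so its maximum cost is $d(W,B)=2\cdot\mathrm{OPT}$. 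In every case we obtain an instance of ratio $2$, which proves the lower bound; it is tight because a dictatorship is unanimous, translation-invariant and strategyproof and on each two-agent profile has maximum cost exactly $d(A,B)=2\cdot\mathrm{OPT}$. (For $n>2$ agents the same instance works after placing the remaining agents at one of the two points.)

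The step I expect to be the main obstacle is ensuring Lemma 2 is genuinely available in a general $L_p$ space ($1<p<\infty$), not only the Euclidean one in which it was stated. This is exactly the content of the remark following Lemma 2: because the unit ball of $L_p$ is strictly convex for $1<p<\infty$, the two $L_p$ balls centred at $A$ and at the intermediate point $A'$, of radii $d(A,W)$ and $d(A',W)$, are internally tangent at $W$ and meet nowhere else, so strategyproofness again forces $f(A',B)=W$; thus the movement lemma transfers verbatim. The only other delicate point is the degenerate possibility $W=B$, which is precisely why I split off the case $W\in\{A,B\}$ and argue there on the original profile, rather than collapsing both agents onto a single point via Lemma 2.
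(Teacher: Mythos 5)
Your proposal is correct and takes essentially the same route as the paper: both arguments pin the facility onto an agent by deducing $f(W,B)=W$ (the paper via the strategyproofness ``tool'' from the proof of Theorem 5, you via the endpoint case of Lemma 2, which is the same one-line deduction) and then read off that the instance $(W,B)$ has maximum cost $d(W,B)=2\cdot\mathrm{OPT}$. Your explicit case split on $W\in\{A,B\}$, the strict-convexity justification for general $L_p$, and the dictatorship tightness remark are careful additions that the paper's terse proof omits (the paper even misprints the key fact as $f(W,B)=B$), but they refine rather than change the approach.
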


\begin{proof}
Assume $f(A,B)=W$ and $A,B$ are the only two agents, then according to the tool for the proof of Theorem 5, we have $f(W,B)=B$, which means that if there two agents located at $W$ and $B$, then the maximum cost of $f$ is always at least twice the optimal maximum cost $d(W,B)/2$.
\qed
\end{proof}

Many works proves this result in one-dimensional space and we give a simple proof of the multi-dimensional space.

\bibliographystyle{splncs04} 
\bibliography{reference} 

\begin{thebibliography}{10}
\providecommand{\url}[1]{\texttt{#1}}
\providecommand{\urlprefix}{URL }
\providecommand{\doi}[1]{https://doi.org/#1}

\bibitem{Alon2010}
Alon, N., Feldman, M., Procaccia, A.D., Tennenholtz, M.: Strategyproof
  approximation of the minimax on networks. Mathematics of Operations Research
  \textbf{35}(3),  513--526 (2010)

\bibitem{Barber1993}
Barber{\`a}, S., Gul, F., Stacchetti, E.: Generalized median voter schemes and
  committees. Journal of Economic Theory  \textbf{61}(2),  262--289 (1993)

\bibitem{Barbera1998}
Barber{\`a}, S., Mass{\'o}, J., Serizawa, S.: Strategy-proof voting on compact
  ranges. games and economic behavior  \textbf{25}(2),  272--291 (1998)

\bibitem{Border1983}
Border, K.C., Jordan, J.S.: Straightforward elections, unanimity and phantom
  voters. The Review of Economic Studies  \textbf{50}(1),  153--170 (1983)

\bibitem{LpNorm}
Feigenbaum, I., Sethuraman, J., Ye, C.: Approximately optimal mechanisms for
  strategyproof facility location: Minimizing lp norm of costs. Mathematics of
  Operations Research  \textbf{42}(2),  434--447 (2017)

\bibitem{filos2017facility}
Filos-Ratsikas, A., Li, M., Zhang, J., Zhang, Q.: Facility location with
  double-peaked preferences. Autonomous Agents and Multi-Agent Systems
  \textbf{31}(6),  1209--1235 (2017)

\bibitem{fong2018facility}
Fong, K.C., Li, M., Lu, P., Todo, T., Yokoo, M.: Facility location games with
  fractional preferences. In: 32nd AAAI Conference on Artificial Intelligence,
  AAAI 2018. pp. 1039--1046. AAAI Press (2018)

\bibitem{Fotakis2013}
Fotakis, D., Tzamos, C.: Winner-imposing strategyproof mechanisms for multiple
  facility location games. Theoretical Computer Science  \textbf{472},  90--103
  (2013)

\bibitem{li2019strategyproof}
Li, M., Lu, P., Yao, Y., Zhang, J.: Strategyproof mechanism for two
  heterogeneous facilities with constant approximation ratio. arXiv preprint
  arXiv:1907.08918  (2019)

\bibitem{Moulin1980}
Moulin, H.: On strategy-proofness and single peakedness. Public Choice
  \textbf{35}(4),  437--455 (1980)

\bibitem{Procaccia2009}
Procaccia, A.D., Tennenholtz, M.: Approximate mechanism design without money.
  In: Proceedings of the 10th ACM conference on Electronic commerce. pp.
  177--186 (2009)

\bibitem{serafino2014heterogeneous}
Serafino, P., Ventre, C.: Heterogeneous facility location without money on the
  line. In: ECAI. pp. 807--812 (2014)

\bibitem{Tang2018}
Tang, P., Yu, D., Zhao, S.: Characterization of group-strategyproof mechanisms
  for facility location in strictly convex space. In: Proceedings of the 21st
  ACM Conference on Economics and Computation. pp. 133--157 (2020)

\bibitem{yuan2016facility}
Yuan, H., Wang, K., Fong, K.C., Zhang, Y., Li, M.: Facility location games with
  optional preference. In: Proceedings of the Twenty-second European Conference
  on Artificial Intelligence. pp. 1520--1527 (2016)

\end{thebibliography}
\end{document}